\renewcommand{\algocf@captiontext}[2]{#1\algocf@typo. \AlCapFnt{}#2} % text of caption
\def\@algocf@capt@plain{top}
\renewcommand{\algocf@makecaption}[2]{%
  \addtolength{\hsize}{\algomargin}%
  \sbox\@tempboxa{\algocf@captiontext{#1}{#2}}%
  \ifdim\wd\@tempboxa >\hsize%     % if caption is longer than a line
    \hskip .5\algomargin%
    \parbox[t]{\hsize}{\algocf@captiontext{#1}{#2}}% then caption is not centered
  \else%
    \global\@minipagefalse%
    \hbox to\hsize{0x\@tempboxa}% else caption is centered
  \fi%
  \addtolength{\hsize}{-\algomargin}%
}
\def\T{{ \mathrm{\scriptscriptstyle T} }}
\newtheorem{theorem}{Theorem}
\newtheorem{proposition}{Proposition}
\newtheorem{remark}{\normalfont \scshape Remark}
\newtheorem{definition}{Definition}
\newtheorem{lemma}{Lemma}
\begin{document}

\title{Exact simulation of max-stable processes}

\author{Cl\'ement Dombry\thanks{Universit\'e de Franche--Comt\'e, Laboratoire de Math\'ematiques de 
       Besan\c{c}on, UMR CNRS 6623, 16 Route de Gray, 25030 Besan\c{c}on cedex, 
			 France. Email: clement.dombry@univ-fcomte.fr},\; Sebastian Engelke\thanks{\'Ecole Polytechnique F\'ed\'erale de Lausanne, 
			 EPFL-FSB-MATHAA-STAT, 
       Station 8, 1015 Lausanne, Switzerland.
       Facult\'e des Hautes Etudes Commerciales, Universit\'e de Lausanne, 
			 Extranef, UNIL-Dorigny, 1015 Lausanne, Switzerland.
			 Email: sebastian.engelke@epfl.ch}\; and\; Marco Oesting\thanks{University of Twente, 
			 Faculty of Geo-Information Science and Earth 
       Observation, PO Box 217, 7500 AE Enschede, The Netherlands.
       Email: m.oesting@utwente.nl}
       }
\date{}

\maketitle

\begin{abstract}
Max-stable processes play an important role as models for spatial extreme
events. Their complex structure as the pointwise maximum over an infinite 
number of random functions makes simulation highly nontrivial. Algorithms based
on finite approximations that are used in practice are often not exact and 
computationally inefficient. We will present two algorithms for exact 
simulation of a max-stable process at a finite number of locations. The first
algorithm generalizes the approach by \citet{DM-2014} for Brown--Resnick 
processes and it is based on simulation from the spectral measure. The second 
algorithm relies on the idea to simulate only the extremal functions, that is, 
those functions in the construction of a max-stable process that effectively 
contribute to the pointwise maximum. We study the complexity of both algorithms
and prove that the second procedure is always more efficient. Moreover, we 
provide closed expressions for their implementation that cover the most popular
models for max-stable processes and extreme value copulas. For simulation on 
dense grids, an adaptive design of the second algorithm is proposed.
\end{abstract}

\textbf{Keywords}: exact simulation; extremal function; extreme value distribution;
max-stable process; spectral measure.

\section{Introduction}

Max-stable processes have become widely used tools to model spatial extreme 
events. Occurring naturally in the context of extremes as limits of maxima of 
independent copies of stochastic processes, they have found many applications
in environmental sciences; see for instance \citet{coles93}, \citet{bdhz08}, 
\citet{bd2011}, \citet{dav2012b}.

Any sample continuous max-stable process $Z$ with unit Fr\'echet margins on 
some compact domain $\mathcal{X}\subset\mathbb{R}^d$ is characterized by a 
point process representation \citep{dehaan-1984}
\begin{equation}\label{eq:spectralrep}
Z(x)=\max_{i\geq 1} \zeta_i \psi_i(x),\quad x\in\mathcal{X},
\end{equation}
where $\{(\zeta_i,\psi_i),i\geq 1\}$ is a Poisson point process on 
$(0,\infty)\times \mathcal{C}$ with intensity measure 
$\zeta^{-2}\mathrm{d}\zeta\times\nu(\mathrm{d}\psi)$ for some locally finite
measure $\nu$ on the space $\mathcal{C}=\mathcal{C}(\mathcal{X},[0,\infty))$ of
continuous non-negative functions on $\mathcal{X}$ such that 
\begin{equation}\label{eq:normalisation}
\textstyle \int_{\mathcal{C}}\psi(x)\nu(\mathrm{d}\psi)=1,\quad x\in\mathcal{X}.
\end{equation}
Figure \ref{fig:intro} shows a realization of $Z$ as a mixture of
different random functions of the above point process.
Due to this complex structure of max-stable processes, in many cases, analytical 
expressions are only available for lower-dimensional distributions and
related characteristics need to be assessed by simulations.
Moreover, non-conditional
simulation is an important part of conditional simulation procedures that can 
be used to predict extreme events given some additional information
\citep[see][for example]{DEMR13, oestingschlather14}. Thus, there is a need for
fast and accurate simulation algorithms. 
\begin{figure} \label{fig:intro}
 \centering \includegraphics[trim = 0mm 15mm 3mm 10mm, clip,scale=.4, width=0.7\textwidth]{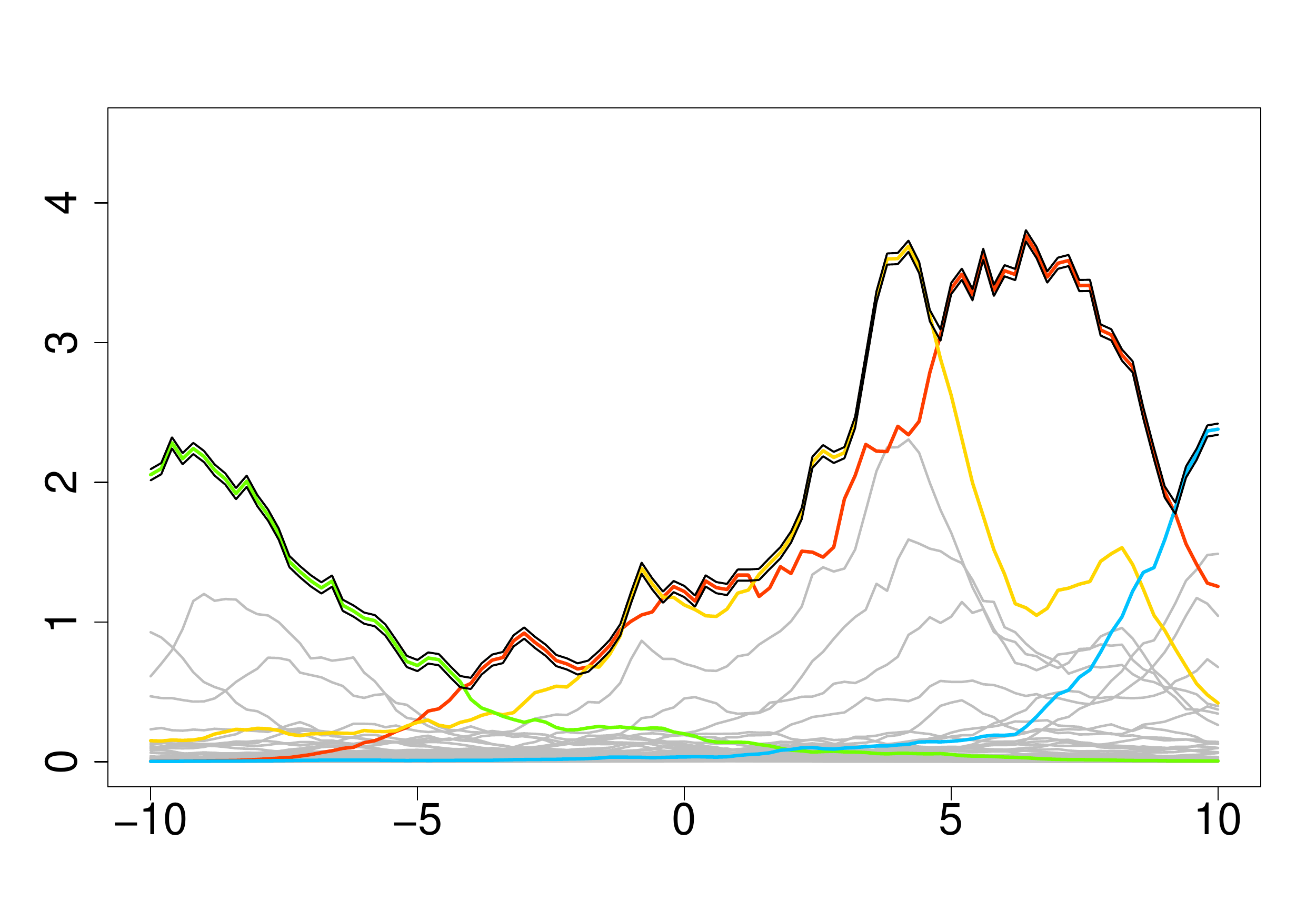}
 \caption{The Poisson point process $\{(\zeta_i,\psi_i),i\geq 1\}$ (grey). Only finitely many (colored) 
   $(\zeta_i,\psi_i)$ contribute the maximum process $Z$ (bordered in black).}
\end{figure}

As the spectral representation \eqref{eq:spectralrep} involves an infinite
number of functions, exact simulation of $Z$ is in general not straightforward
and finite approximations are used in practice. For the widely used
Brown--Resnick processes \citep{KSH-2009}, for instance, \citet{EKS-2011} and
\citet{OKS-2012} exploit the fact that the representation 
\eqref{eq:spectralrep} is not unique in order to propose simulation procedures 
based on equivalent representations. However, often these approximations do not 
provide satisfactory results in terms of accuracy or computational effort.

Exact simulation procedures can so far be implemented only in special cases. 
\citet{schlather-2002} proposes an algorithm that simulates the points 
$\{\zeta_i,i \geq 1\}$ in \eqref{eq:spectralrep} subsequently in a descending 
order until some stopping rule takes effect. If $\nu$ is the probability 
measure of a stochastic process whose supremum on $\mathcal{X}$ is almost 
surely bounded or if $Z$ is mixed moving maxima process with uniformly bounded
and compactly supported shape function, this procedure allows for exact 
simulation of $Z$. For extremal-$t$ processes \citep{opitz-2013}, the 
elliptical structure of Gaussian processes can be exploited to obtain exact 
samples \citep{thi2014}. \citet{OSZ-2013}  focus on a class of equivalent 
representations for general max-stable processes that, in principle, allow for
exact simulation in an optimal way in terms of efficiency. They propose to 
simulate max-stable processes via the normalized spectral representation with 
all the spectral functions sharing the same supremum. Being efficient with 
respect to the number of spectral functions, the simulation of a single 
normalized spectral function might be rather intricate in some cases including
the class of Brown--Resnick processes. For the latter, \citet{DM-2014} recently
proposed a representation that enables exact simulation at finitely many 
locations. 

Besides, several articles focus on the simulation of finite dimensional 
max-stable distributions or, equivalently, of their associated extreme value 
copula. \cite{gou1998} and \cite{cap2000} propose simulation procedures for 
certain bivariate extreme value distributions. \cite{ste2003} considers the 
case of extreme value distributions of logistic type. \cite{bol2009} provides a
method for exact simulation from the spectral measure of extremal Dirichlet and
logistic distributions.

In this paper, we propose two methods to simulate a general max-stable process
$Z$ exactly at a finite number of locations. At first, we propose a 
generalization of the algorithm by \citet{DM-2014} showing that their approach
relies on sampling from the spectral measure on the $L_1$-sphere of a 
multivariate extreme value distribution. The main idea of the second procedure
is to simulate out of the infinite set $\{\zeta_i\psi_i, i\geq 1\}$ only the 
extremal functions \citep[cf.][]{dombry-eyiminko-2012, 
dombry-eyiminko-2013}, i.e.\ those functions that satisfy 
$\zeta_i\psi_i(x)=Z(x)$ for some $x\in\mathcal{X}$ (the colored functions in 
Fig.~\ref{fig:intro}). In contrast to all existing simulation procedures, the
process $Z$ is not simulated simultaneously, but subsequently at different 
locations, rejecting all those functions that are not compatible with the 
process at the locations simulated so far. Both new procedures are based on
random functions following the same type of distribution that can be easily
simulated for most of the popular max-stable models. Our algorithms also apply 
very efficiently to exact simulation of finite-dimensional max-stable 
distributions or, equivalently, of the associated extreme value copulas.

\section{Extremal functions}

Without loss of generality, we may henceforth assume that $Z$ is a 
sample-continuous process with unit Fr\'echet margins given by the spectral 
representation \eqref{eq:spectralrep}. Indeed any sample-continuous max-stable 
process can be obtained from a process with unit Fr\'echet margins via marginal
transformations. We provide in this section some preliminaries on extremal 
functions and their distributions that will be essential in the new simulation 
methods. We use a point process approach and recall first that the 
$\mathcal{C}$-valued point process $\Phi=\{\phi_i\}_{i\geq 1}$ with 
$\phi_i=\zeta_i\psi_i$ is a Poisson point process with intensity
\begin{equation}\label{eq:defmu} \textstyle
\mu(A)=\int_{\mathcal{C}} \int_0^\infty 1_{\{\zeta \psi\in A\}}\zeta^{-2}\,\mathrm{d}\zeta\,\nu(\mathrm{d}\psi), \quad A\subset \mathcal{C} \mbox{ Borel}.
\end{equation}

\begin{definition}\label{def1}
Let $K\subset\mathcal{X}$ be a nonempty compact subset. A function $\phi\in\Phi$
is called $K$-extremal if there is some $x\in K$ such that $\phi(x)=Z(x)$, 
otherwise it is called $K$-subextremal. We denote by $\Phi_K^+$ the set of 
$K$-extremal functions and by $\Phi_K^-$ the set of $K$-subextremal functions. 
\end{definition}

It can be shown  that $\Phi_K^+$ and $\Phi_K^-$ are properly defined point 
process. When $K=\{x_0\}$, $x_0\in \mathcal{X}$, is reduced to a single point,
it is easy to show that $\Phi_{\{x_0\}}^+$ is also almost surely reduced to a
single point that we denote by $\phi_{x_0}^+$, termed the extremal function at 
point $x_0$. The distribution of $\phi_{x_0}^+$ is given in the next 
proposition \citep[see Proposition 4.2 in][]{dombry-eyiminko-2013}.

\begin{proposition}\label{prop2}
The random variables $Z({x_0})$ and $\phi_{x_0}^+/Z({x_0})$ are independent. 
Furthermore, $Z({x_0})$ has a unit Fr\'echet distribution and the distribution
of $\phi_{x_0}^+/Z({x_0})$ is
\begin{eqnarray}
P_{x_0}(A)={}&\mathrm{pr}(\phi_{x_0}^+/Z({x_0})\in A) 
={}& \int_{\mathcal{C}} 1_{\{f/f({x_0})\in A\}} f({x_0})\, \nu(\mathrm{d}f)\label{eq:prop2}, 
\quad A\subset \mathcal{C} \mbox{ Borel}.
\end{eqnarray}
\end{proposition}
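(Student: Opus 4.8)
The plan is to transport $\Phi$ to a Poisson point process on $(0,\infty)\times\mathcal{C}$ by recording, for each function, its value at $x_0$ together with its renormalisation by that value, and then to invoke the classical description of the largest point of such a process and of the point that attains it.

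First I would pass to the sub-point process $\Phi'=\{\phi\in\Phi:\phi(x_0)>0\}$, which is again a Poisson point process by the restriction theorem; since $Z(x_0)>0$ almost surely, the discarded functions never attain $Z(x_0)$, so $\phi_{x_0}^+\in\Phi'$ and $Z(x_0)=\max_{\phi\in\Phi'}\phi(x_0)$ a.s. Applying the mapping theorem to $T:\phi\mapsto\big(\phi(x_0),\,\phi/\phi(x_0)\big)$ and carrying out the change of variables $u=\zeta\psi(x_0)$ in \eqref{eq:defmu}, so that $\zeta^{-2}\,\mathrm{d}\zeta=\psi(x_0)\,u^{-2}\,\mathrm{d}u$, shows that $T(\Phi')$ is a Poisson point process on $(0,\infty)\times\mathcal{C}$ with intensity measure $u^{-2}\mathbf{1}_{\{u>0\}}\,\mathrm{d}u\times P_{x_0}(\mathrm{d}\psi)$, with $P_{x_0}$ the measure in \eqref{eq:prop2}; the normalisation \eqref{eq:normalisation} gives $P_{x_0}(\mathcal{C})=\int_{\mathcal{C}}\psi(x_0)\,\nu(\mathrm{d}\psi)=1$, so $P_{x_0}$ is a probability measure. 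Under this identification, $Z(x_0)$ is the largest first coordinate among the points of $T(\Phi')$ and $\phi_{x_0}^+/Z(x_0)$ is the second coordinate (mark) of the unique point attaining it.

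It then remains to compute, for a Poisson point process on $(0,\infty)\times\mathcal{C}$ with intensity $u^{-2}\,\mathrm{d}u\times Q$ where $Q=P_{x_0}$ is a probability measure, the joint law of the largest first coordinate and its mark. Fixing a Borel set $A\subset\mathcal{C}$, I would split the process by colouring points according to whether the mark lies in $A$ or in $A^c$; the two resulting processes are independent Poisson point processes, so their running maxima $M_A$ and $M_{A^c}$ of first coordinates are independent with $\mathrm{pr}(M_A\le t)=\exp(-Q(A)/t)$ and $\mathrm{pr}(M_{A^c}\le t)=\exp(-Q(A^c)/t)$. Since the largest point is almost surely unique, $\{\phi_{x_0}^+/Z(x_0)\in A\}=\{M_A>M_{A^c}\}$ and $Z(x_0)=\max(M_A,M_{A^c})$, whence $\mathrm{pr}\big(Z(x_0)\le t,\ \phi_{x_0}^+/Z(x_0)\in A\big)=\mathrm{pr}(M_{A^c}<M_A\le t)=\int_0^t e^{-Q(A^c)/s}\,\mathrm{d}\big(e^{-Q(A)/s}\big)=Q(A)\,e^{-1/t}$, using $Q(A)+Q(A^c)=1$. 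Letting $t\to\infty$ gives $\mathrm{pr}(\phi_{x_0}^+/Z(x_0)\in A)=P_{x_0}(A)$; taking $A=\mathcal{C}$ gives $\mathrm{pr}(Z(x_0)\le t)=e^{-1/t}$, the unit Fr\'echet law; and the product form of the joint probability yields the claimed independence.

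The computations are otherwise routine; the two places calling for a little care are the restriction to $\{\phi(x_0)>0\}$, which disposes of functions that $\nu$ may place on $\{\psi(x_0)=0\}$, and the almost sure uniqueness of the largest point of $T(\Phi')$ --- equivalently that $\Phi_{\{x_0\}}^+$ is a.s.\ a singleton, which has already been recorded above. The one substantive step is the change-of-variables identification of the image intensity as a product measure; the Fr\'echet marginal and the independence then follow from the standard colouring argument.
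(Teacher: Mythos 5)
Your argument is correct and complete. The paper itself does not prove Proposition~\ref{prop2} but cites Proposition~4.2 of \citet{dombry-eyiminko-2013}; your proof is essentially the standard one and is fully consistent with the machinery the paper does display: the change of variable $u=\zeta\psi(x_0)$ identifying the intensity of the renormalised process as $u^{-2}\,\mathrm{d}u\times P_{x_0}(\mathrm{d}\psi)$ is exactly the computation carried out in the proof of Proposition~\ref{prop3} (read in the reverse direction), and the normalisation \eqref{eq:normalisation} correctly certifies that $P_{x_0}$ is a probability measure. Your colouring step --- splitting the marked Poisson process according to whether the mark lies in $A$ or $A^c$, computing $\mathrm{pr}(M_{A^c}<M_A\le t)=Q(A)e^{-1/t}$, and reading off independence, the unit Fr\'echet marginal and the law $P_{x_0}$ simultaneously from the product form --- is a clean alternative to the Mecke/Palm-calculus computation used in the cited reference, and you correctly flag the two points needing care (the restriction to $\{\phi(x_0)>0\}$ and the a.s.\ uniqueness of the argmax, which follows since the first coordinates form a Poisson process with non-atomic intensity $u^{-2}\,\mathrm{d}u$). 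No gaps.
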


By definition, $\phi_{x_0}^+({x_0})=Z({x_0})$. This entails that the 
distribution $P_{x_0}$ is supported by the subset of functions 
$\{f\in\mathcal{C}, f({x_0})=1\}$. 
\begin{proposition}\label{prop3}
The restricted point process 
\[
\Phi\cap\{f\in \mathcal C, f({x_0})>0\} 
\]
is a Poisson point process with intensity
\begin{align}
\int_{A} 1_{\{f({x_0})>0\}} \mu(\mathrm{d}f)
={}& \int_{\mathcal{C}} \int_0^\infty 1_{\{\zeta f\in A\}}\zeta^{-2}\,\mathrm{d}\zeta\, P_{x_0}(\mathrm{d}f),
 \quad A\subset \mathcal{C} \mbox{ Borel}\label{eq:prop3}. 
\end{align}
\end{proposition}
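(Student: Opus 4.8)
The plan is to compute the intensity measure of the restricted point process $\Phi \cap \{f \in \mathcal{C}, f(x_0) > 0\}$ directly from the known intensity $\mu$ of $\Phi$ given in \eqref{eq:defmu}, and then to recognize the resulting expression as the right-hand side of \eqref{eq:prop3} by disintegrating along rays. First I would recall the standard fact that restricting a Poisson point process to a measurable set yields again a Poisson point process, whose intensity is the original intensity restricted to that set; hence the restricted process has intensity $A \mapsto \mu(A \cap \{f(x_0) > 0\}) = \int_A 1_{\{f(x_0) > 0\}}\, \mu(\mathrm{d}f)$, which is exactly the left-hand side of \eqref{eq:prop3}. So the content of the proposition is the identity of this measure with the iterated integral on the right.

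Next I would unfold the left-hand side using the definition \eqref{eq:defmu}:
\[
\int_{\mathcal{C}} 1_{\{f(x_0)>0\}} 1_{\{f \in A\}}\, \mu(\mathrm{d}f)
= \int_{\mathcal{C}} \int_0^\infty 1_{\{\zeta\psi(x_0)>0\}} 1_{\{\zeta\psi \in A\}}\, \zeta^{-2}\,\mathrm{d}\zeta\,\nu(\mathrm{d}\psi).
\]
Since $\zeta > 0$ always on the Poisson process, the condition $\zeta\psi(x_0) > 0$ is equivalent to $\psi(x_0) > 0$. On the set $\{\psi(x_0) > 0\}$ I would perform the substitution $\zeta \mapsto \zeta / \psi(x_0)$ in the inner integral (for fixed $\psi$), which rescales the function $\zeta\psi$ to $\zeta \cdot \psi/\psi(x_0)$ and transforms the measure: $\zeta^{-2}\,\mathrm{d}\zeta$ is invariant up to the Jacobian factor, producing a factor $\psi(x_0)$. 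Concretely, writing $g = \psi/\psi(x_0)$ so that $g(x_0) = 1$, the inner integral becomes $\psi(x_0) \int_0^\infty 1_{\{\zeta g \in A\}}\, \zeta^{-2}\,\mathrm{d}\zeta$. This is where the scaling structure of the $\zeta^{-2}\,\mathrm{d}\zeta$ intensity does the work, and it is the only genuine computation in the argument.

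Finally I would recognize that $\int_{\mathcal{C}} \psi(x_0) \, h(\psi/\psi(x_0))\, 1_{\{\psi(x_0)>0\}}\,\nu(\mathrm{d}\psi) = \int_{\mathcal{C}} h(f)\, P_{x_0}(\mathrm{d}f)$ for any bounded measurable $h$ on $\mathcal{C}$, which is precisely the defining formula \eqref{eq:prop2} for $P_{x_0}$ from Proposition \ref{prop2} (applied with $h(f) = \int_0^\infty 1_{\{\zeta f \in A\}} \zeta^{-2}\,\mathrm{d}\zeta$, noting that $P_{x_0}$ is supported on $\{f(x_0)=1\}$ so that the $f(x_0)$-rescaling inside $h$ is harmless). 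Substituting this back, and using Fubini--Tonelli to re-order the $\zeta$- and $f$-integrals, yields exactly the right-hand side of \eqref{eq:prop3}. The main obstacle, such as it is, is bookkeeping: making sure the change of variables is applied only on $\{\psi(x_0)>0\}$ and that the resulting integrand genuinely matches the integral representation of $P_{x_0}$; measurability and $\sigma$-finiteness needed for Tonelli are routine since $\nu$ is locally finite and the integrands are nonnegative.
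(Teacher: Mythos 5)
Your proof is correct and follows essentially the same route as the paper's: both rest on the standard restriction fact for Poisson processes, the change of variable $\tilde\zeta=\zeta/f(x_0)$ exploiting the scaling of $\zeta^{-2}\,\mathrm{d}\zeta$, and the defining identity \eqref{eq:prop2} for $P_{x_0}$. The only difference is cosmetic: you compute from the left-hand side to the right-hand side, whereas the paper runs the same chain of equalities in the opposite direction.
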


\begin{proof}
The fact that the restricted point process 
$\Phi\cap\{f\in \mathcal C, f({x_0})>0\}$ is a Poisson point process with 
intensity $1_{\{f({x_0})>0\}} \mu(\mathrm{d}f)$ is standard. We prove Equation
\eqref{eq:prop3}. For $A\subset\mathcal{C}$ Borel,
\begin{align*}
  \int_{\mathcal{C}} \int_0^\infty 1_{\{\zeta f\in A\}}\zeta^{-2}\mathrm{d}\zeta\,& P_{x_0}(\mathrm{d}f)
{}={} \int_{\mathcal{C}} \int_0^\infty 1_{\{\zeta f/f({x_0})\in A\}}\zeta^{-2}\mathrm{d}\zeta \,f({x_0})\,
       \nu(\mathrm{d}f) \displaybreak[0]\\
={}& \int_{\mathcal{C}} \int_0^\infty 1_{\{\tilde \zeta f\in A\}}\tilde\zeta^{-2}\mathrm{d}\tilde\zeta 
        \,1_{\{f({x_0})>0\}}\,\nu(\mathrm{d}f)
{}={} \int_{\mathcal{C}}1_{\{f\in A\}} 1_{\{f({x_0})>0\}}\mu(\mathrm{d}f).
\end{align*}
Here, we use successively Eq. \eqref{eq:prop2}, the change of variable 
$\tilde\zeta=\zeta/f({x_0})$ with $f({x_0})>0$ and Eq.\ \eqref{eq:defmu} for the
last equality.
\end{proof}

\begin{remark}\label{rkprop3}
As a consequence of \eqref{eq:prop3}, independent copies $Y_i$, $i\geq 1$,
of processes with distribution $P_{x_0}$ result in a point process 
$\{\zeta_iY_i\}_{i\geq 1}$ which has the same distribution as the
restricted point process $\Phi\cap\{f\in \mathcal C, f({x_0})>0\}$. If 
$\nu(\{f\in \mathcal C, f({x_0})=0\})=0$, then $\Phi$ consists only of 
functions with  positive value at ${x_0}$ and  $\Phi$ has the same distribution
as $\{\zeta_iY_i\}_{i\geq 1}$. This provides an alternative point process 
representation of the max-stable process $Z$ in terms of a random process $Y$ 
such that $Y({x_0})=1$ almost surely. In \cite{eng2012} and \cite{eng2014},
this representation is exploited for statistical inference of $Z$.
\end{remark}

It follows clearly from Definition \ref{def1} that we have the decomposition 
$\Phi=\Phi_K^+ \cup\Phi_K^-$. The following proposition will play a crucial 
role in our second approach based on extremal functions. If $f_1,f_2$ are two 
functions on $\mathcal{X}$, the notation $f_1<_Kf_2$ means $f_1(x)<f_2(x)$ for
all $x\in K$.
\begin{proposition}[\cite{dombry-eyiminko-2012}, Lemma 3.2]\label{prop1}
The conditional distribution of $\Phi_K^-$ with respect to $\Phi_K^+$ is equal
to the distribution of a Poisson point process on $\mathcal{C}$ with intensity
$1_{\{f<_K Z\}}\mu(df)$.
\end{proposition}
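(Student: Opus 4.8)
The plan is to determine the conditional law of $\Phi_K^-$ given $\Phi_K^+$ by testing against a rich class of functionals, exploiting that $\Phi$ is a Poisson point process. For a finite configuration $\gamma=\{\varphi_1,\dots,\varphi_n\}\subset\mathcal{C}$ write $M_\gamma=\max_{j}\varphi_j$ for the pointwise maximum, and let $Q_\gamma$ be the law of a Poisson point process on $\mathcal{C}$ with intensity $1_{\{f<_KM_\gamma\}}\mu(\mathrm{d}f)$. On every realization the restriction $Z|_K$ is attained by a $K$-extremal function, so $Z|_K=(\max_{\phi\in\Phi_K^+}\phi)|_K=M_{\Phi_K^+}|_K$, and hence $1_{\{f<_KZ\}}\mu(\mathrm{d}f)$ is a measurable function of $\Phi_K^+$; the proposition is therefore equivalent to the identity
\[
E\big[G(\Phi_K^+)\,F(\Phi_K^-)\big]=E\big[G(\Phi_K^+)\textstyle\int F\,\mathrm{d}Q_{\Phi_K^+}\big]
\]
for all bounded measurable functionals $G,F$ of a point configuration. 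Since $\Phi_K^+$ is almost surely finite and non-empty, it suffices to establish this on each event $\{|\Phi_K^+|=n\}$, $n\geq1$, and sum.

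On $\{|\Phi_K^+|=n\}$ I would write $G(\Phi_K^+)F(\Phi_K^-)$ as $\frac1{n!}$ times the sum, over ordered $n$-tuples $(\varphi_1,\dots,\varphi_n)$ of distinct points of $\Phi$, of $1_{E_n}\,G(\{\varphi_j\})\,F(\Phi\setminus\{\varphi_j\})$, where $E_n=E_n(\varphi_1,\dots,\varphi_n;\Phi)$ is the event that $\{\varphi_1,\dots,\varphi_n\}$ is exactly the set of $K$-extremal functions of $\Phi$. Applying the multivariate Mecke equation then gives
\[
n!\,E\big[1_{\{|\Phi_K^+|=n\}}G(\Phi_K^+)F(\Phi_K^-)\big]=\int_{\mathcal{C}^n}G(\{\varphi_j\})\,E\big[1_{E_n(\varphi_1,\dots,\varphi_n;\,\Phi)}\,F(\Phi)\big]\,\mu(\mathrm{d}\varphi_1)\cdots\mu(\mathrm{d}\varphi_n),
\]
where now $\Phi$ is an independent Poisson point process with intensity $\mu$ and, inside the integral, $K$-extremality refers to the augmented configuration $\Phi+\delta_{\varphi_1}+\cdots+\delta_{\varphi_n}$.

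The crux is the structural identity: for $\mu^{\otimes n}$-almost every $(\varphi_1,\dots,\varphi_n)$ and almost every $\Phi$,
\[
E_n=\big\{\text{each }\varphi_j\text{ attains }M_\varphi:=\textstyle\max_i\varphi_i\text{ somewhere on }K\big\}\cap\big\{\Phi(B_{M_\varphi})=0\big\},\qquad B_{M_\varphi}=\{f\in\mathcal{C}:f(x)\geq M_\varphi(x)\text{ for some }x\in K\}.
\]
The inclusion $\supseteq$ is easy: if $\Phi$ avoids $B_{M_\varphi}$ then every $\phi\in\Phi$ satisfies $\phi<_KM_\varphi$, so $Z|_K=M_\varphi|_K$, and the $K$-extremal functions of the augmented configuration are exactly those $\varphi_j$ attaining $M_\varphi$ on $K$. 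For $\subseteq$, suppose some $\phi\in\Phi$ met $B_{M_\varphi}$ at a point $x_0\in K$; being $K$-subextremal it would satisfy $\phi(x_0)<Z(x_0)$, forcing $Z(x_0)>M_\varphi(x_0)$ and hence $Z(x_0)=\sup_{\phi'\in\Phi}\phi'(x_0)$, and whenever this supremum is attained it is attained by a $K$-extremal $\phi'\in\Phi$ --- a contradiction. I expect this last point to be the main obstacle: one must rule out the pathological configurations in which $\sup_{\phi'\in\Phi}\phi'(x_0)$ fails to be attained for some $x_0\in K$. This is a null event because $\int_{\mathcal{C}}\sup_K\psi\,\nu(\mathrm{d}\psi)<\infty$, a standard consequence of the sample continuity of $Z$ on the compact set $K$; this makes $\{\phi\in\Phi:\sup_K\phi>t\}$ almost surely finite for every $t>0$ and, together with sample continuity, guarantees the supremum is attained at every point of $K$ simultaneously. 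The same estimate also yields the almost sure finiteness of $\Phi_K^+$ invoked earlier, and one checks along the way that $\Phi$ is a simple point process.

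Granting the factorization, the restriction property of Poisson point processes gives $E[1_{\{\Phi(B_{M_\varphi})=0\}}F(\Phi)]=e^{-\mu(B_{M_\varphi})}\int F\,\mathrm{d}Q_{\{\varphi_j\}}$, because on $\{\Phi(B_{M_\varphi})=0\}$ the process $\Phi$ coincides with its restriction to $\{f<_KM_\varphi\}$, an independent Poisson point process with intensity $1_{\{f<_KM_\varphi\}}\mu(\mathrm{d}f)$. Hence the right-hand side of the Mecke identity becomes
\[
\int_{\mathcal{C}^n}G(\{\varphi_j\})\Big(\textstyle\int F\,\mathrm{d}Q_{\{\varphi_j\}}\Big)\,1_{\{\text{each }\varphi_j\text{ attains }M_\varphi\text{ on }K\}}\,e^{-\mu(B_{M_\varphi})}\,\mu(\mathrm{d}\varphi_1)\cdots\mu(\mathrm{d}\varphi_n).
\]
Running the Mecke computation in reverse --- now with the bounded functional $\widetilde G(\gamma)=G(\gamma)\int F\,\mathrm{d}Q_\gamma$ in place of $G$ and with $F\equiv1$, so that $E[1_{E_n}]=1_{\{\text{each }\varphi_j\text{ attains }M_\varphi\text{ on }K\}}e^{-\mu(B_{M_\varphi})}$ --- identifies this with $n!\,E[1_{\{|\Phi_K^+|=n\}}G(\Phi_K^+)\int F\,\mathrm{d}Q_{\Phi_K^+}]$. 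Therefore $E[1_{\{|\Phi_K^+|=n\}}G(\Phi_K^+)F(\Phi_K^-)]=E[1_{\{|\Phi_K^+|=n\}}G(\Phi_K^+)\int F\,\mathrm{d}Q_{\Phi_K^+}]$; summing over $n\geq1$ proves the displayed identity, hence the proposition. The only remaining points --- measurability of the maps $\Phi\mapsto\Phi_K^\pm$ and $\gamma\mapsto\int F\,\mathrm{d}Q_\gamma$, and the fact that a countable family of $F$'s (e.g. Laplace functionals) determines the conditional law --- are routine.
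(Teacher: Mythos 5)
The paper offers no proof of Proposition~\ref{prop1}: it is imported verbatim from Lemma~3.2 of \cite{dombry-eyiminko-2012}, so there is no internal argument to compare against. Your proof is correct and is essentially the argument of that reference: decompose over $\{|\Phi_K^+|=n\}$, apply the multivariate Mecke equation, factor the event ``$(\varphi_1,\dots,\varphi_n)$ is exactly the extremal set'' into a condition on the tuple alone times the avoidance event $\{\Phi(B_{M_\varphi})=0\}$, invoke independence of Poisson restrictions, and reverse the Mecke computation. The technical inputs you isolate --- $\int_{\mathcal C}\sup_K\psi\,\nu(\mathrm{d}\psi)<\infty$ forcing a.s.\ attainment of the supremum and a.s.\ finiteness of $\Phi_K^+$, plus simplicity of $\Phi$ (which holds since $\mu$ is diffuse off the zero function) --- are exactly the ones needed, and the degenerate case $\mu(B_{M_\varphi})=\infty$ makes both sides vanish, so no gap remains.
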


\section{Exact simulation procedures}

\subsection{Introduction}

Recently, \citet{DM-2014} provided a new representation of stationary
Brown--Resnick processes that allows for exact simulation of their 
finite-dimensional distributions. In this section we introduce two methods for
exact simulation of arbitrary max-stable processes and distributions.
More precisely, for a fixed number  $N\in\mathbb N$ of pairwise distinct 
locations $x=(x_1,\ldots,x_N)\in\mathcal{X}^N$, we aim at obtaining exact 
simulation of the max-stable random vector
\begin{align}\label{vec_Z}
  Z(x) = (Z(x_1),\dots, Z(x_N)).
\end{align}
Both procedures are intimately connected with the distribution $P_x$ in 
\eqref{eq:prop2}. 

The first method extends the \citet{DM-2014} approach. In fact, we show that 
their representation is nothing else than the so-called spectral 
representation of the Brown--Resnick process, and their procedure actually
enables exact simulation from the spectral measure on the $L_1$-sphere. 
In Section \ref{sec_spec} we derive a way of simulating from the spectral 
measure of a general max-stable distribution or a possibly non-stationary 
max-stable process.

The second procedure presented in Section \ref{sec_ppp} relies on conditional
distributions of the Poisson point process underlying any max-stable process. 
This approach also allows for exact simulation of \eqref{vec_Z} by simulating
at each location only the unique function that actually attains the maximum;
see Fig.~\ref{fig:intro}. It is intuitive and turns out to be even more 
powerful than the spectral method.

\subsection{Simulation via the spectral measure} \label{sec_spec}

Let us recall the spectral decomposition of the max-stable random vector 
$Z(x)$; for details we refer to \citet[Chap.\ 5]{res2008}. Here, we write
$f(x) = (f(x_1), \dots, f(x_N))$ for the restriction of a generic (random) 
function $f$ to the locations $x\in \mathcal X^N$. Following Equation 
\eqref{eq:spectralrep}, the max-stable random vector $Z(x)$ is generated by the
Poisson point process $\Phi_{x}=\{\zeta_i \psi_i(x), i\geq 1\}$ whose intensity
measure on the cone $E = [0,\infty)^N$ is denoted by $\mu_{x}$. Due to its 
homogeneity, the exponent measure $\mu_{x}$ can be factorized into a radial 
part on $(0,\infty)$ and an angular part on the unit $L_1$-sphere 
$S_{N-1}=\{z \in E: \, \|z\|=1\}$, where $\|z\| = z_1 + \dots + z_N$, for
$z = (z_1,\dots,z_N)\in E$. More precisely, a change to polar coordinates under
the map $U: E \to (0,\infty)\times S_{N-1}$, $U(z) = (\|z\|, z / \|z\|)$ yields
\begin{align} \label{spec_dens}
  \mu_{x}(F) &= \int_{U(F)} \mu_{x} \circ U^{-1}({\rm d}r, {\rm d} s) 
	        = N \int_{U(F)} r^{-2} {\rm d}r H({\rm d} s),
\end{align}
for any Borel subset $F\subset E$. The probability measure $H$ on $S_{N-1}$ is
called spectral measure of $Z(x)$ and it satisfies
\begin{align*}
  \textstyle \int_{S_{N-1}}  s_j H(d s) = N^{-1}, \quad j = 1,\dots, N.
\end{align*}
Equation \eqref{spec_dens} shows that we can represent the process $\Phi_{x}$ as
\begin{align*}
  \Phi_x = \{ U^{-1}(R_i, {Q}_i) :
i \geq 1\} = \{R_i {Q}_i : i \geq 1\},
\end{align*}
where $\{R_i:\ i \geq 1\}$ is a Poisson point process on $(0,\infty)$ with 
intensity $N r^{-2}{\rm d}r$ and ${Q}_i$, $i \geq 1$, are independent 
samples from the spectral measure $H$ on $S_{N-1}$. The great advantage of this 
representation is that the components of ${Q}_i$ are bounded by $1$. 
This ensures that 
$Z(x) = \max_{i  \geq 1} R_i {Q}_i$
can be simulated exactly by generating the largest $R_i$ first until no
more of the remaining points $R_i {Q}_i$ can contribute to the maximum.

The only difficulty is thus to generate the random variables ${Q}_i$
from the probability measure $H$ on the $(N-1)$-dimensional positive sphere 
$S_{N-1}$. The following theorem gives the solution to this problem for the
max-stable distribution $Z(x)$.

\begin{theorem} \label{theo:spec-measure}
  Let $T_i$, $i \geq 1$, be independent copies of a random variable $T$
  with uniform distribution on the discrete set $\{1,\dots, N\}$. Further, for
  any $k=1, \dots, N$, let ${Y}^{(k)}_i$, $i \geq 1$, be 
	independent random processes with distribution $P_{x_k}$ as in \eqref{eq:prop2}. 
Then, the $S_{N-1}$-valued random variables
  \begin{align*}
    {Q}_i =\frac{{Y}^{(T_i)}_i(x)}{\| {Y}^{(T_i)}_{i}(x)\|},\quad  i \geq 1,
  \end{align*}
  are independent with distribution $H$. Consequently, with 
  $\{R_i,\ i \geq 1\}$ as above, 
  \begin{align*}
    Z(x) = \max_{i \geq 1} R_i \frac{{Y}^{(T_i)}_i(x)}{\| {Y}^{(T_i)}_{i}(x)\|}.
  \end{align*}
\end{theorem}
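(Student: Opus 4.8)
The plan is to establish the distributional claim for a single index $i$; mutual independence of the $Q_i$ is then automatic, because each $Q_i$ is a deterministic function of the pair $\bigl(T_i,(Y^{(k)}_i)_{k=1}^{N}\bigr)$ and these pairs are independent across $i$ by construction. So fix $i$, drop it from the notation, and determine the law of $Q=Y^{(T)}(x)/\|Y^{(T)}(x)\|$, where $T$ is uniform on $\{1,\dots,N\}$ and, conditionally on $T=k$, the process $Y^{(k)}$ has law $P_{x_k}$ independently of $T$ (note $P_{x_k}$ is automatically a probability measure by \eqref{eq:normalisation}). First observe that $Q$ is almost surely well defined: by Proposition \ref{prop2} the measure $P_{x_k}$ is carried by $\{f\in\mathcal C: f(x_k)=1\}$, so $\|Y^{(T)}(x)\|\geq 1$.

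Next I would condition on $T$ and substitute the explicit form \eqref{eq:prop2} of $P_{x_k}$. Writing each conditional probability $\mathrm{pr}(Q\in A\mid T=k)$ as an integral against $f(x_k)\,\mathrm{d}\nu$... more precisely against $f(x_k)\,\nu(\mathrm{d}f)$, using that this weight confines the integration to $\{f(x_k)>0\}$ and that the angular map $g\mapsto g(x)/\|g(x)\|$ is invariant under positive rescaling of $g$, and finally summing over $k$ with $\sum_{k=1}^{N}f(x_k)=\|f(x)\|$, one arrives at
\begin{align*}
  \mathrm{pr}(Q\in A) &= \frac1N\sum_{k=1}^{N}\int_{\mathcal C}1_{\{f(x)/\|f(x)\|\in A\}}\,f(x_k)\,\nu(\mathrm{d}f)\\
  &= \frac1N\int_{\mathcal C}1_{\{f(x)/\|f(x)\|\in A\}}\,\|f(x)\|\,\nu(\mathrm{d}f),\qquad A\subset S_{N-1}\ \text{Borel}.
\end{align*}
It then remains to identify this with $H(A)$. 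For that I would evaluate $\mu_x$ on the cylinder sets $F_{a,A}=\{z\in E\setminus\{0\}: \|z\|>a,\ z/\|z\|\in A\}$, $a>0$, in two ways. On one hand, the polar factorisation \eqref{spec_dens} gives $\mu_x(F_{a,A})=N\,H(A)\int_a^\infty r^{-2}\,\mathrm{d}r=(N/a)\,H(A)$. On the other hand, plugging $F_{a,A}$ into the definition \eqref{eq:defmu} and computing the inner integral $\int_0^\infty 1_{\{\zeta\|\psi(x)\|>a\}}\zeta^{-2}\,\mathrm{d}\zeta=\|\psi(x)\|/a$ for $\psi(x)\neq 0$ gives $\mu_x(F_{a,A})=a^{-1}\int_{\mathcal C}1_{\{\psi(x)/\|\psi(x)\|\in A\}}\|\psi(x)\|\,\nu(\mathrm{d}\psi)$. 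Equating the two and cancelling $N/a$ yields $H(A)=\mathrm{pr}(Q\in A)$, the assertion about the $Q_i$. The final identity $Z(x)=\max_{i\geq1}R_iQ_i$ (in distribution) then follows, since $\{R_iQ_i\}$ is a Poisson point process on $(0,\infty)$ with intensity $Nr^{-2}\,\mathrm{d}r$ marked by independent $H$-distributed angular parts, hence has the same law as $\Phi_x$, and pointwise maximisation preserves this.

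The argument is short and essentially forced; the step I would regard as the heart of it is the double computation of $\mu_x(F_{a,A})$ that pins down $H$. The only delicate points are measure-theoretic bookkeeping: checking that the weight $f(x_k)$ in \eqref{eq:prop2} genuinely restricts the integral to $\{f(x_k)>0\}$, so that the normalisation defining $Q$ is meaningful, and keeping track of the set of functions vanishing at all of $x_1,\dots,x_N$, which is irrelevant both for $\mu_x$ (living on $E\setminus\{0\}$) and for the maximum. I do not anticipate any genuine obstacle beyond organising these changes of variables cleanly.
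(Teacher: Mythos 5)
Your proposal is correct and follows essentially the same route as the paper: the paper likewise evaluates $\mu_x$ on the polar cylinder sets $U^{-1}((u,\infty)\times A)$ both via the definition \eqref{eq:defmu} (splitting $\|f(x)\|=\sum_k f(x_k)$ and invoking \eqref{eq:prop2}) and via the factorisation \eqref{spec_dens}, then equates the two to identify $H$ with the law of $Q$. Your only reorganisation is to compute $\mathrm{pr}(Q\in A)$ first and match it to $H(A)$ afterwards, which is the same calculation in a different order.
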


\begin{proof}
 For any $k = 1,\dots, N$, Eq. \eqref{eq:prop2} implies 
  \begin{align}\label{eq:P_x}
   \textstyle \int_{\mathcal{C}} f(x_k) 1_{\{ f(x)  / \|f(x)\| \in A\}} \, \nu(\mathrm{d}f)
    = \int_{\mathcal{C}} 1_{\{ f(x)  / \|f(x)\| \in A\}} \, P_{x_k}(\mathrm{d}f).
  \end{align}  
  We compute the $\mu_{x}$-measure of the set $U^{-1}((u,\infty)\times A)$ for 
	$u>0$ and a Borel set $A \subset S_{N-1}$.
  \begin{align*}
       \mu_{x}(U^{-1}(&(u,\infty)\times A)) 
 {}={} \int_{\mathcal{C}} \int_0^\infty 1_{\{\zeta\|f(x)\|>u\}} 1_{\{f(x)/\|f(x)\|\in A\}} 
	           \zeta^{-2} \mathrm{d}\zeta\, \nu(\mathrm{d}f)\\
   ={}& \frac 1u \int_{\mathcal{C}} \|f(x)\| 1_{\{ f(x)  / \|f(x)\| \in A\}} \, \nu(\mathrm{d}f)
 {}={} \frac 1u \sum_{k=1}^N \int_{\mathcal{C}} f(x_k) 1_{\{ f(x)  / \|f(x)\| \in A\}} \, \nu(\mathrm{d}f)
        \displaybreak[0]\\
   ={}& \frac 1u \sum_{k=1}^N \int_{\mathcal{C}} 1_{\{ f(x)  / \|f(x)\| \in A\}} \, P_{x_k}(\mathrm{d}f)
 {}={} \frac Nu \cdot \frac 1N \sum_{k=1}^N \int_{\mathcal{C}} 1_{\{ f(x)  / \|f(x)\| \in A\}} \, 
	      P_{x_k}(\mathrm{d}f),
  \end{align*}  
  where the second last equation follows from \eqref{eq:P_x}. Let ${Y}^{(k)}$, 
	$k = 1, \dots, N$, be independent random processes with distribution 
	$P_{x_k}$, respectively, and let $T$ be an independent uniform random 
	variable on $\{1,\dots, N\}$, then the above implies that
  \begin{align*}
    \mu_{x}(U^{-1}((u,\infty)\times A)) 
    = \frac Nu \cdot \mathrm{pr}\left\{\frac{{Y}^{(T)}(x)}{\| {Y}^{(T)}(x)\|} \in A \right\}.
  \end{align*} 
  Comparing this with \eqref{spec_dens} yields the 
  assertion of the theorem.
\end{proof}

Theorem \ref{theo:spec-measure} shows how to simulate from the spectral measure
$H$. It requires only to be able to simulate from the distributions $P_{x_k}$, 
$k = 1,\dots,N$. Algorithm \ref{algo-spectral}, an adaptation of Schlather's
\citeyearpar{schlather-2002} algorithm, provides an exact sample from the 
max-stable process $Z$ at locations $x$.

\begin{algorithm} \caption{Simulation of a max-stable process $Z$, exactly at $x=(x_1,\ldots,x_N)$}
 \label{algo-spectral}
Simulate $\zeta^{-1}\sim \mathrm{Exp}(N)$ and set $Z(x)=0$.\\
While $(\zeta > \min(Z(x_1),\dots, Z(x_N)))\ \{$\\
\quad Simulate $T$ uniform on $\{1,\dots, N\}$ and $Y$ 
      according to the law $P_{x_T}$.\\
\quad Update $Z(x)$ by the componentwise
       $\max(Z(x),\zeta {Y}(x) / \|{Y}(x)\|)$.\\
\quad Simulate $E\sim \mathrm{Exp}(N)$ and update $\zeta^{-1}$ by $\zeta^{-1}+E$. \\
$\}$ \\
Return $Z$.
\end{algorithm}

\begin{remark}
 The results on the distribution $P_{x_k}$ for stationary Brown--Resnick
 processes obtained in Subsection \ref{subsec:BR} reveal that Algorithm
 \ref{algo-spectral} is identical to the algorithm by \citet{DM-2014} in this
 case.
\end{remark}

\subsection{Simulation via extremal functions}
\label{sec_ppp}

We now introduce the second procedure for exact simulation of the max-stable
process $Z$ at locations $x=(x_1,\ldots,x_N)\in\mathcal{X}^N$. For
$n=1,\ldots,N$ we consider the extremal and subextremal point processes 
$\Phi_n^+=\Phi_{\{x_1,\ldots,x_n\}}^+$ and $\Phi_n^-= \Phi_{\{x_1,\ldots,x_n\}}^-$.
We have $\Phi_n^+=\{\phi_{x_i}^+\}_{1\leq i\leq n}$ but there may be some 
repetitions in the right-hand side. We define the $n$th-step maximum process
\begin{equation}\label{eq:Zn}
Z_n(x)=\max_{\phi\in\Phi_n^+} \phi(x)=\max_{1\leq i\leq n} \phi_{x_i}^+(x),\quad x\in\mathcal{X}.
\end{equation}
By the definition of extremal functions we have $Z(x_i)=\phi_{x_i}^+(x_i)$ and 
clearly 
\begin{equation}\label{eq:Znbis}
Z(x_i)=Z_n(x_i), \quad i=1,\ldots,n.
\end{equation}
Hence, in order to exactly simulate $Z$ at locations $x$, it is enough
to exactly simulate $\Phi_N^+$. We will proceed inductively and simulate the 
sequence $(\phi_{x_n}^+)_{1\leq n\leq N}$ according to the following theorem. 
\begin{theorem}\label{theo2}
The distribution of $(\phi_{x_n}^+)_{1\leq n\leq N}$ is given as follows:
\begin{itemize}
\item {\it Initial distribution:}
the extremal function $\phi_{x_1}^+$ has the same distribution as $F_1Y_1$ 
where $F_1$ is a unit Fr\'echet random variable and $Y_1$ an independent
random process with distribution $P_{x_1}$ given by \eqref{eq:prop2}.
\item {\it Conditional distribution:}
for $1\leq n\leq N-1$, the conditional distribution of $\phi_{x_{n+1}}^+$ with
respect to $(\phi_{x_i}^+)_{1\leq i\leq n}$ is equal to the distribution of 
\[
\tilde\phi_{x_{n+1}}^+=\left\{\begin{array}{ll}
\mathop{\mathrm{argmax}}_{\phi\in\tilde\Phi_{n+1}} \phi(x_{n+1}) & \mbox{if } \tilde \Phi_{n+1}\neq\emptyset\\
\mathop{\mathrm{argmax}}_{\phi\in\Phi_n^+} \phi(x_{n+1}) & \mbox{if } \tilde \Phi_{n+1}=\emptyset
\end{array}\right.
\]
where $\tilde \Phi_{n+1}$ is a Poisson point process with intensity 
\begin{equation}\label{eq:tildephin+1}
1_{\left\{f(x_i)<Z_n(x_i),\ 1\leq i\leq n \right\}}1_{\left\{f(x_{n+1})>Z_n(x_{n+1})\right\}}\mu(\mathrm{d}f)
\end{equation}
and $Z_n$ is defined by \eqref{eq:Zn}.
\end{itemize}
\end{theorem}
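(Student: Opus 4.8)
The plan is to identify the joint law of $(\phi_{x_n}^+)_{1\le n\le N}$ by induction on $n$, building on the two structural facts already available: Proposition~\ref{prop2} (which gives the initial distribution directly, since $\phi_{x_1}^+ = Z(x_1)\cdot(\phi_{x_1}^+/Z(x_1))$ with the two factors independent, unit Fr\'echet and $P_{x_1}$ respectively) and Proposition~\ref{prop1} (the conditional law of the subextremal process given the extremal process). The base case is therefore immediate from Proposition~\ref{prop2}, and the whole content is the induction step: describing the conditional law of $\phi_{x_{n+1}}^+$ given $(\phi_{x_i}^+)_{1\le i\le n}$.

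For the induction step I would argue as follows. Condition on $\Phi_n^+ = \{\phi_{x_i}^+\}_{1\le i\le n}$; this determines $Z_n$ on all of $\mathcal X$, and by \eqref{eq:Znbis} it determines the true values $Z(x_i)=Z_n(x_i)$ for $i\le n$. By Proposition~\ref{prop1}, the remaining points $\Phi_n^- = \Phi\setminus\Phi_n^+$ form, conditionally, a Poisson point process with intensity $1_{\{f<_{\{x_1,\dots,x_n\}} Z_n\}}\mu(\mathrm df)$, i.e.\ $1_{\{f(x_i)<Z_n(x_i),\ 1\le i\le n\}}\mu(\mathrm df)$. Now the extremal function at $x_{n+1}$ is the function in $\Phi=\Phi_n^+\cup\Phi_n^-$ maximizing the $x_{n+1}$-coordinate. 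Among the already-known functions $\Phi_n^+$, the largest value at $x_{n+1}$ is $\max_{\phi\in\Phi_n^+}\phi(x_{n+1}) = Z_n(x_{n+1})$. So $\phi_{x_{n+1}}^+$ is the $\mathrm{argmax}$ over $\Phi_n^-$ of $\phi(x_{n+1})$ if some function in $\Phi_n^-$ exceeds $Z_n(x_{n+1})$ there, and otherwise it is the $\mathrm{argmax}$ over $\Phi_n^+$. Restricting the conditional Poisson process $\Phi_n^-$ to the event $\{f(x_{n+1})>Z_n(x_{n+1})\}$ multiplies its intensity by the indicator $1_{\{f(x_{n+1})>Z_n(x_{n+1})\}}$, yielding exactly the intensity \eqref{eq:tildephin+1}; and this restricted process is nonempty precisely when the first branch of the definition of $\tilde\phi_{x_{n+1}}^+$ applies. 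This matches the claimed conditional law, with $\tilde\Phi_{n+1}$ distributed as this restricted process.

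The step I expect to need the most care is the measure-theoretic bookkeeping around conditioning: making precise that "conditionally on $\Phi_n^+$, the process $\Phi_n^-$ has intensity $1_{\{f<_{K_n} Z_n\}}\mu(\mathrm df)$'' is a statement about a regular version of the conditional law, and that $Z_n$ (hence the indicator) is a measurable functional of the conditioning variable $\Phi_n^+$, so that the restricted process $\tilde\Phi_{n+1}$ is well-defined as a Poisson process with (random) intensity \eqref{eq:tildephin+1}. One also has to check that the $\mathrm{argmax}$ is almost surely unique — for $\tilde\Phi_{n+1}$ this follows because two distinct points of a Poisson process on $\mathcal C$ a.s.\ take distinct values at the fixed location $x_{n+1}$ (the intensity $\mu$ has no atoms in the relevant sense, inherited from the $\zeta^{-2}\,\mathrm d\zeta$ factor), and for the fallback branch over $\Phi_n^+$ one uses the same property of the original process $\Phi$ together with the fact that $\Phi_n^+$ is a subset of its points; the event $\tilde\Phi_{n+1}=\emptyset$ coincides up to a null set with $\{Z(x_{n+1})=Z_n(x_{n+1})\}$, i.e.\ the extremal function at $x_{n+1}$ is one already seen. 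Finally, I would note that the two branches are consistent: on $\{\tilde\Phi_{n+1}=\emptyset\}$ the argmax over $\Phi_n^+$ is genuinely $\phi_{x_{n+1}}^+$, and off this event the argmax over $\tilde\Phi_{n+1}$ agrees with the global argmax because nothing in $\Phi_n^+$ can beat $Z_n(x_{n+1})$ at $x_{n+1}$ while every point of $\tilde\Phi_{n+1}$ does.
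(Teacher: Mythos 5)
Your proposal is correct and follows essentially the same route as the paper: the base case is read off from Proposition~\ref{prop2}, and the induction step conditions on $\Phi_n^+$, invokes Proposition~\ref{prop1} (with the identity $Z(x_i)=Z_n(x_i)$ from \eqref{eq:Znbis}) to get the conditional Poisson law of $\Phi_n^-$, restricts it to $\{f(x_{n+1})>Z_n(x_{n+1})\}$ to obtain $\tilde\Phi_{n+1}$ with intensity \eqref{eq:tildephin+1}, and splits into the two cases according to whether this restriction is empty. The extra remarks on measurability of the conditioning and a.s.\ uniqueness of the argmax are sound housekeeping that the paper leaves implicit.
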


\begin{proof}
The distribution of $\phi_{x_1}^+$ is given in Proposition \ref{prop2}. We 
prove the result for the conditional distribution of $\phi_{x_{n+1}}^+$ with
respect to $(\phi_{x_i}^+)_{1\leq i\leq n}$. Recall that 
$\Phi_n^+=\{\phi_{x_1}^+,\ldots,\phi_{x_n}^+\}$ and that according to 
Proposition \ref{prop1}, the conditional distribution of $\Phi_n^-$ with
respect to $\Phi_n^+$ is equal to the distribution of a Poisson point process
with intensity 
\begin{equation}\label{eq:intensity}
1_{\{f(x_i)< Z(x_i),\ 1\leq i\leq n\}}\mu(\mathrm{d}f)=1_{\{f(x_i)< Z_n(x_i),\ 1\leq i\leq n\}}\mu(\mathrm{d}f),
\end{equation}
where the equality follows from Eq.~\eqref{eq:Znbis}.
In order to determine $\phi_{x_{n+1}}^+$ we focus on the functions 
$\phi\in\Phi_n^-$ satisfying $\phi(x_{n+1})>Z_n(x_{n+1})$ and consider
the restriction 
\[
\tilde\Phi_{n+1}=\Phi_n^-\cap\left\{f\in\mathcal{C}, f(x_{n+1})>Z_n(x_{n+1})\right\}.
\]
It follows from Eq.~\eqref{eq:intensity} that conditionally on 
$(\phi_{x_i}^+)_{1\leq i\leq n}$, $\tilde\Phi_{n+1}$ is a Poisson point process
with intensity given by Eq.~\eqref{eq:tildephin+1}. We distinguish two cases:
\begin{itemize}
\item if $\tilde\Phi_{n+1}=\emptyset$ then there is no function in $\Phi_n^-$
exceeding $Z_n$ at point $x_{n+1}$, that is, $Z(x_{n+1})=Z_n(x_{n+1})$ 
and $\phi_{x_{n+1}}^+=\mathop{\mathrm{argmax}}_{\phi\in\Phi_n^+} \phi(x_{n+1})$.
\item If $\tilde\Phi_{n+1}\neq \emptyset$ then there is some function in 
$\Phi_n^-$ exceeding $Z_n$ at point $x_{n+1}$, that is, 
$Z(x_{n+1})>Z_n(x_{n+1})$ and 
$\phi_{x_{n+1}}^+=\mathop{\mathrm{argmax}}_{\phi\in\tilde\Phi_{n+1}} \phi(x_{n+1})$.
\end{itemize}
This concludes the proof of Theorem \ref{theo2}.
\end{proof}

\begin{figure}[t]
\centering
{\includegraphics[trim = 0mm 10mm 3mm 10mm, clip,scale=.35]{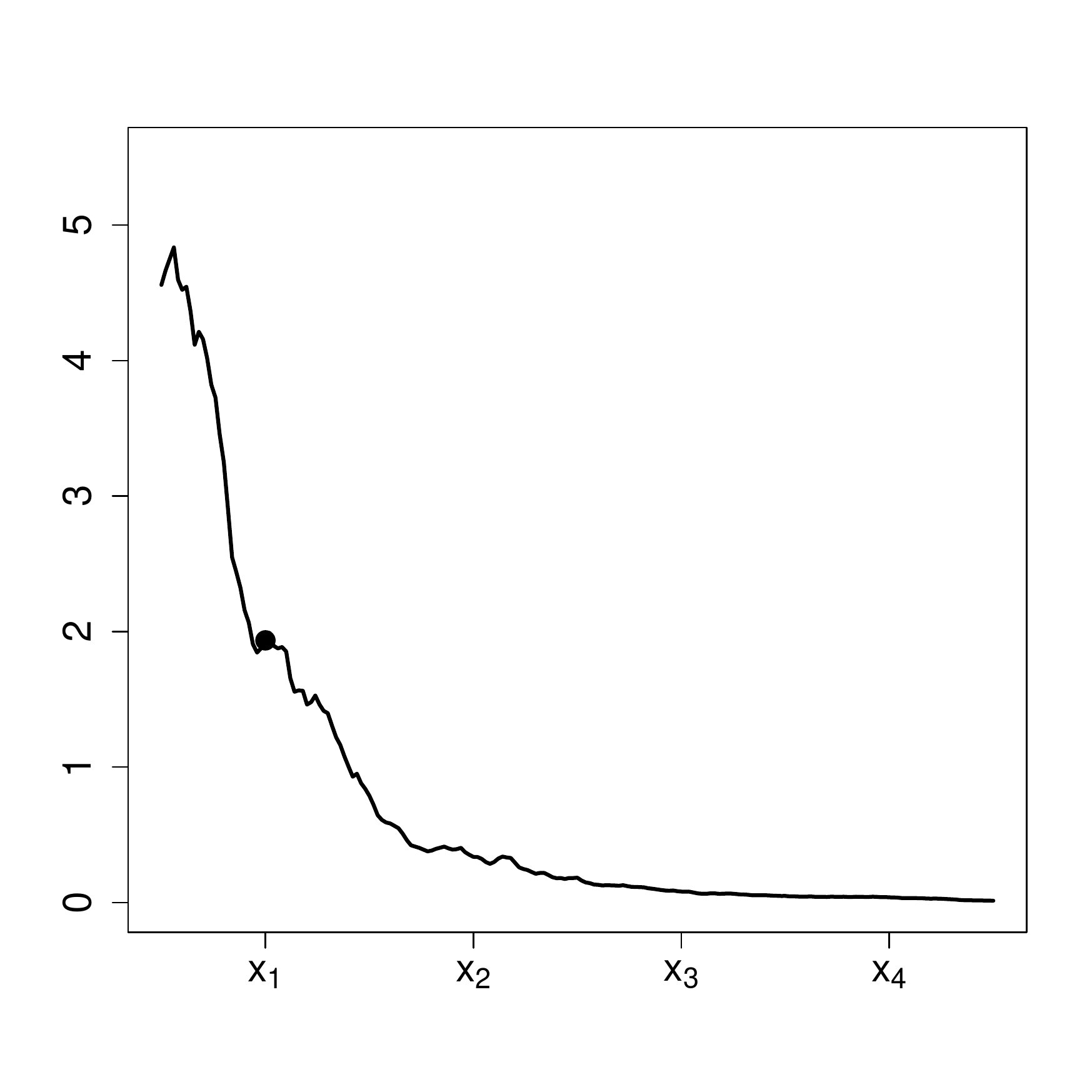}}%
{\includegraphics[trim = 10mm 10mm 3mm 10mm, clip,scale=.35]{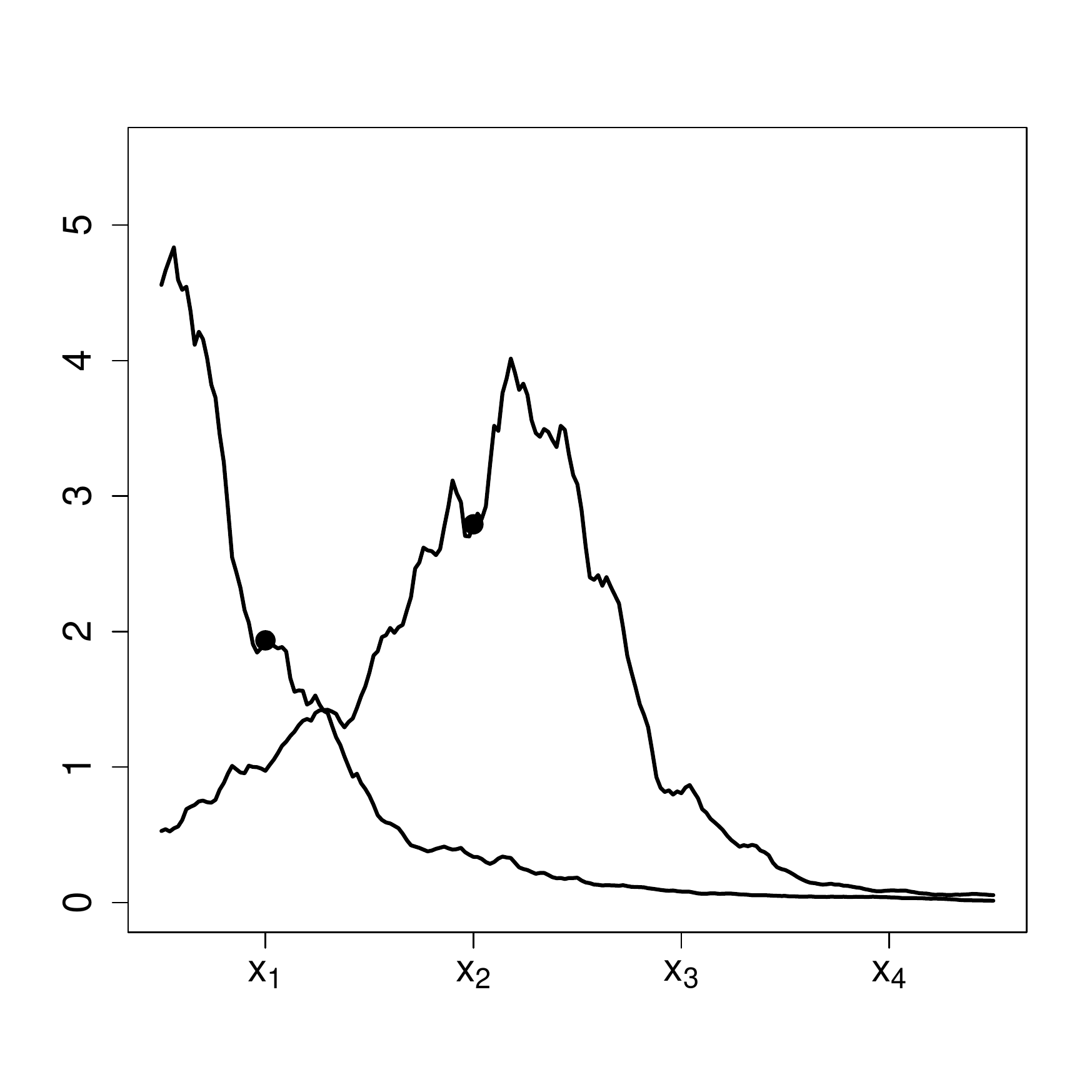}}\\
{\includegraphics[trim = 0mm 3mm 3mm 20mm, clip,scale=.35]{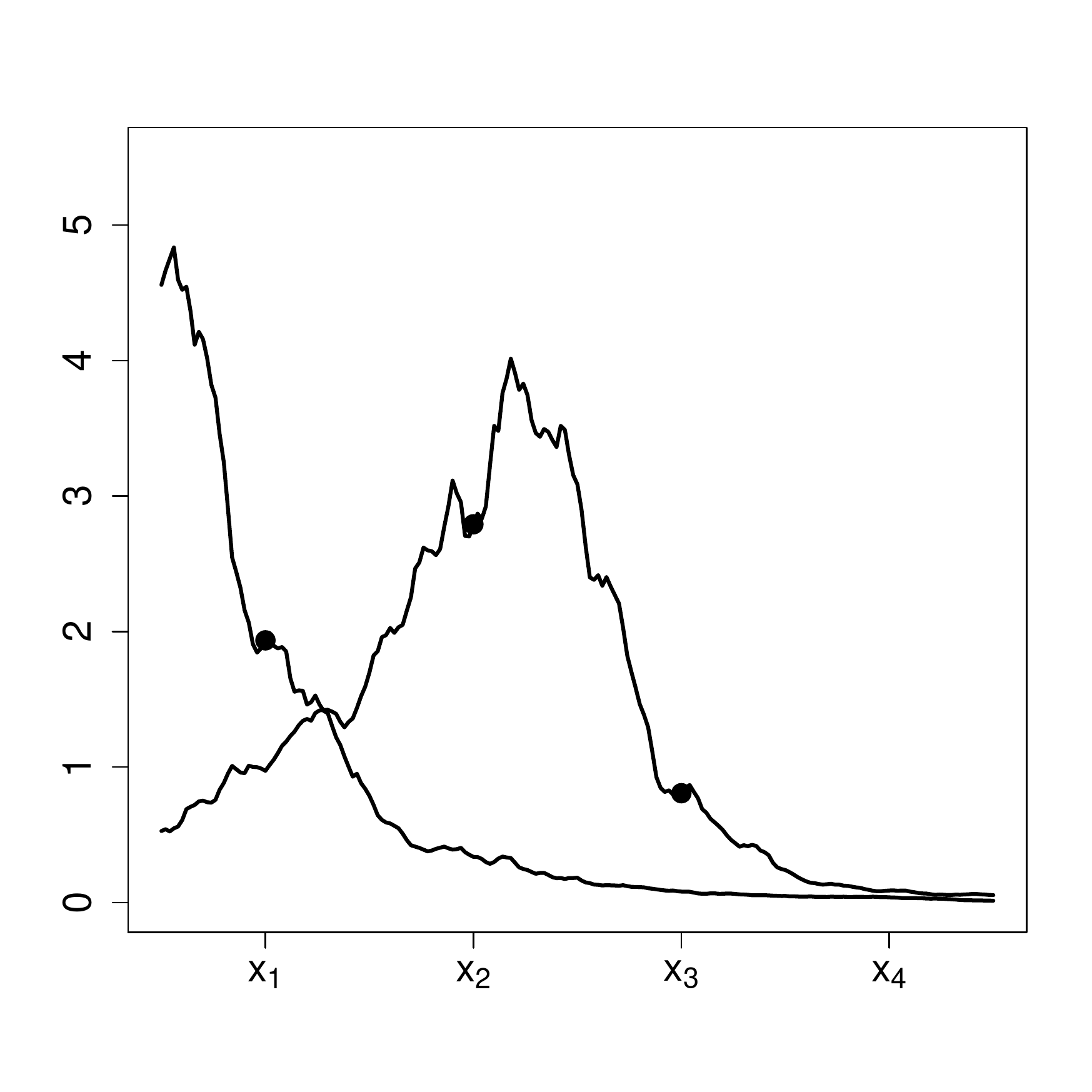}}%
{\includegraphics[trim = 10mm 3mm 3mm 20mm, clip,scale=.35]{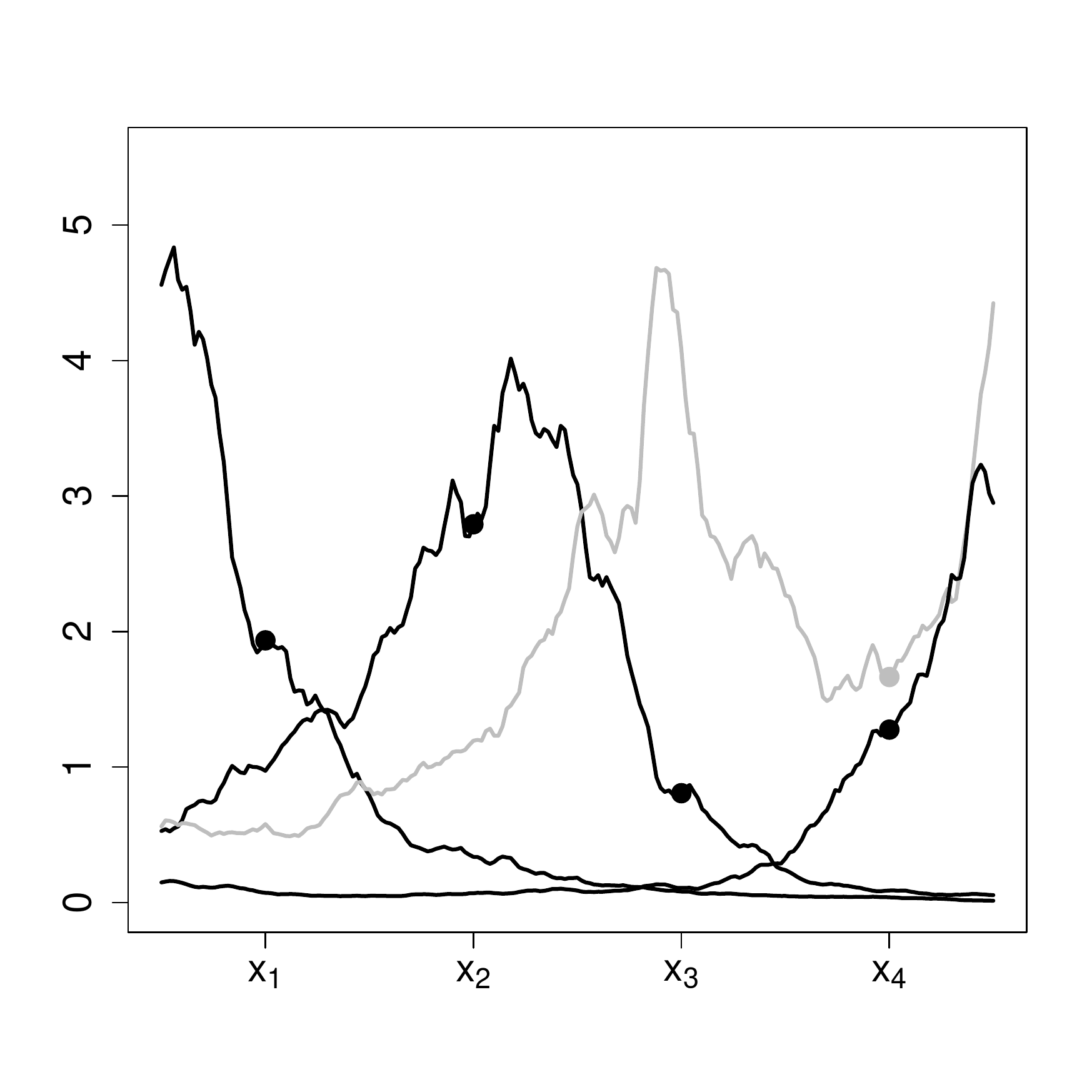}}%\subfloat[]
\caption{Simulation of $Z$ via Algorithm \ref{algo-ppp} at locations $(x_1,x_2,x_3,x_4)$.
  Initial process $\phi^+_{x_1}$ is always accepted (top-left).
  Second process $\phi^+_{x_2}$ is accepted as it exceeds $Z_1 = \phi^+_{x_1}$
  at $x_2$ but not at $x_1$ (top-right). Third process $\phi^+_{x_3}$ 
  is equal to $\phi^+_{x_2}$ since $\tilde \Phi_{3}=\emptyset$ (bottom-left).
  First sample of $P_{x_4}$ (grey line) is rejected since it exceeds $Z_3$ at
  $x_3$; second sample is valid and thus called $\phi^+_{x_4}$ (bottom-right). 
}
\label{fig:algo-ppp} 
\end{figure} 

From the above theorem, one can deduce Algorithm \ref{algo-ppp} for exact 
simulation of the max-stable process $Z$ at locations $x=(x_1,\ldots,x_N)$. 
According to Proposition \ref{prop3} and Remark \ref{rkprop3}, the
distribution $P_{x_{n+1}}$ can be used to simulate $\tilde\Phi_{n+1}$ with 
intensity \eqref{eq:tildephin+1}. Hence, as for the spectral method, the 
second algorithm requires only to be able to simulate from the distributions 
$P_{x_n}$, $n = 1,\dots, N$. Figure \ref{fig:algo-ppp} illustrates the 
procedure.

\begin{algorithm} \caption{Simulation of a max-stable process $Z$, exactly at $x = (x_1,\ldots,x_N)$} \label{algo-ppp}
Simulate $\zeta^{-1}\sim \mathrm{Exp}(1)$ and $Y\sim P_{x_1}$.\\
Set $Z(x)=\zeta Y(x)$.\\
For $n=2,\ldots,N$:\\
\quad  Simulate $\zeta^{-1}\sim \mathrm{Exp}(1)$.\\
\quad  while $(\zeta > Z(x_n))\ \{$\\
\quad\quad   Simulate $Y\sim P_{x_n}$.\\
\quad\quad   If $\zeta Y(x_i)< Z(x_i)$ for all  $i=1,\cdots,n-1$, \\
\quad\quad\quad      update $Z(x)$ by the componentwise $\max(Z(x),\zeta Y(x))$ .\\
\quad\quad   Simulate $E\sim \mathrm{Exp}(1)$ and update $\zeta^{-1}$ by $\zeta^{-1}+E$. \\
\quad   $\}$\\
Return $Z$.
\end{algorithm}

\section{Complexity of the Algorithms} \label{sec:complexity}

In this section, we aim at assessing the complexity of Algorithms
\ref{algo-spectral} and \ref{algo-ppp} as a function of the number $N$ of 
simulation sites. Both algorithms contain the simulation of exponential random
variables $E$ and the simulation of $N$-dimensional random vectors $Y(x)$ according to a
mixture of the laws $P_{x_1},\ldots,P_{x_N}$. The simulation of $E$ causes much
less computational effort than the simulation of $Y$ and can therefore be 
neglected in the analysis of the algorithmic complexity. We thus consider the  
number $C_1(N)$ and $C_2(N)$ of random vectors $Y(x)$ that need to be simulated
by Algorithm \ref{algo-spectral} and \ref{algo-ppp} respectively to obtain one 
exact simulation of $Z(x)$. Interestingly, one can provide simple expressions 
for $\mathrm{E}(C_1(N))$ and $\mathrm{E}(C_2(N))$.

\begin{proposition} \label{prop:complex}
 The expected number of random vectors $Y(x)$ that are needed for
 exact simulation of $Z$ at $x = (x_1,\ldots,x_N)$ are:\\[.5em]
  \emph{Algorithm 1:} $\quad \displaystyle{\mathrm{E}(C_1(N)) = N \mathrm{E}\left(\max_{i=1}^N Z(x_i)^{-1}\right)}$\\[.5em]
  \emph{Algorithm 2:} \quad $\mathrm{E}(C_2(N))=N$\\[.5em]
  Furthermore,  $\mathrm{E}(C_1(N)) \geq \mathrm{E}(C_2(N))$ with equality if 
	and only if $Z(x_1) = \ldots = Z(x_N)$ almost surely.
\end{proposition}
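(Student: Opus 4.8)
The plan is to analyse Algorithm~2 first (its bookkeeping is cleaner), then Algorithm~1, and finally to read off the inequality from the two closed formulas. For Algorithm~2, decompose $C_2(N)=\sum_{n=1}^N C_2^{(n)}(N)$, where $C_2^{(n)}(N)$ counts the vectors drawn while treating location $x_n$, so $C_2^{(1)}(N)=1$. Fix $n\ge 2$ and condition on the history $\mathcal F_{n-1}$ of the first $n-1$ passes, which determines $Z_{n-1}$. In pass $n$ the algorithm runs through a Poisson point process $\{\zeta_j\}_{j\ge1}$ of intensity $r^{-2}\mathrm dr$ in decreasing order; the change of variable $\tau_j=1/\zeta_j$ makes it a homogeneous rate-one process, of which the points with $\tau_j<1/Z_{n-1}(x_n)$ are examined. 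Given $\mathcal F_{n-1}$, a vector drawn at ``time'' $\tau$ passes the test $\zeta Y(x_i)<Z_{n-1}(x_i)$ for $i<n$ with a probability $q(\tau)$ nondecreasing in $\tau$, so the accepted and rejected points are independent thinnings of the rate-one process with intensities $q(\tau)\,\mathrm d\tau$ and $(1-q(\tau))\,\mathrm d\tau$; the pass stops at the first accepted point, whose ``time'' I call $S$.

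Since before the first acceptance every examined point is processed and rejected, and the accepted one is then processed as well, conditioning on $S$ (a thinning of the process, independent of the rejected points) and one integration by parts give
\[
\mathrm E\big(C_2^{(n)}(N)\mid\mathcal F_{n-1}\big)
=\int_0^{1/Z_{n-1}(x_n)}\exp\!\Big(-\!\int_0^s q(\tau)\,\mathrm d\tau\Big)\,\mathrm ds
=\mathrm E\big(\min\{S,\,1/Z_{n-1}(x_n)\}\mid\mathcal F_{n-1}\big).
\]
The crucial point is the identity $\min\{S,\,1/Z_{n-1}(x_n)\}=1/Z_n(x_n)$ almost surely: by Theorem~\ref{theo2}, when $\tilde\Phi_n\neq\emptyset$ the first accepted point is exactly $\phi_{x_n}^+$ and, because $Y(x_n)=1$ under $P_{x_n}$, $1/S=\phi_{x_n}^+(x_n)=Z_n(x_n)\ge Z_{n-1}(x_n)$; when $\tilde\Phi_n=\emptyset$ we have $S=\infty$ while $Z_n(x_n)=Z_{n-1}(x_n)$. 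Hence $\mathrm E\big(C_2^{(n)}(N)\big)=\mathrm E\big(1/Z_n(x_n)\big)=\mathrm E\big(1/Z(x_n)\big)=1$, using \eqref{eq:Znbis} and that $Z(x_n)$ is unit Fréchet so $1/Z(x_n)\sim\mathrm{Exp}(1)$; summing over $n$ yields $\mathrm E(C_2(N))=N$.

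For Algorithm~1, $C_1(N)$ is the number of passes through the while loop, i.e.\ the number of points $R_i$ of the radial Poisson point process (intensity $Nr^{-2}\mathrm dr$) that are processed. Each coordinate of $Q_i$ is at most $\|Q_i\|=1$, so $R_iQ_i\le R_i$ coordinatewise; consequently, once the running coordinatewise minimum of the maxima built from radial points larger than a level $\rho$, namely $m_{>\rho}:=\min_k Z_{>\rho}(x_k)$ with $Z_{>\rho}(x_k)=\max\{R_jQ_{jk}:R_j>\rho\}$, exceeds $\rho$, it stays above every later radial value, the output is exact, and a point with radial value $\rho$ is processed precisely when $\rho>m_{>\rho}$. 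Conditioning on the configuration of radial points exceeding $\rho$ (which fixes $m_{>\rho}$ and is independent of the point count near $\rho$), i.e.\ the Slivnyak--Mecke formula, gives $\mathrm E(C_1(N))=\int_0^\infty N\rho^{-2}\,\mathrm{pr}(\rho>m_{>\rho})\,\mathrm d\rho$. Now $Z(x_k)=\max\{Z_{>\rho}(x_k),Z_{\le\rho}(x_k)\}$ with $Z_{\le\rho}(x_k)\le\rho$, so the events $\{\exists k:\ Z_{>\rho}(x_k)<\rho\}$ and $\{\exists k:\ Z(x_k)<\rho\}$ agree outside $\{\min_kZ(x_k)=\rho\}$, whence $\mathrm{pr}(\rho>m_{>\rho})=\mathrm{pr}(\min_kZ(x_k)<\rho)$ for all but countably many $\rho$; Fubini then gives $\mathrm E(C_1(N))=N\,\mathrm E\big(1/\min_kZ(x_k)\big)=N\,\mathrm E\big(\max_kZ(x_k)^{-1}\big)$.

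Finally, $\max_kZ(x_k)^{-1}\ge Z(x_1)^{-1}$ pointwise and $\mathrm E(Z(x_1)^{-1})=1$, so $\mathrm E(C_1(N))\ge N=\mathrm E(C_2(N))$; equality forces $\mathrm E(\max_kZ(x_k)^{-1})=1$, and since $\max_kZ(x_k)^{-1}\ge Z(x_j)^{-1}$ with $\mathrm E(Z(x_j)^{-1})=1$ for every $j$, equality of these expectations implies $\max_kZ(x_k)^{-1}=Z(x_j)^{-1}$ almost surely for each $j$, i.e.\ $Z(x_1)=\dots=Z(x_N)$ almost surely; the converse is immediate. The main obstacle is the formula for $\mathrm E(C_1(N))$: the tempting shortcut ``$C_1(N)$ equals the number of radial points above $\min_kZ(x_k)$'' fails by a boundary term in degenerate configurations (where a minimal coordinate of $Z$ is attained by a spectral function vanishing at the other sites), so one must phrase the stopping rule through the running minimum over \emph{strictly} larger radial points and pair it with the bound $Z_{\le\rho}(x_k)\le\rho$; for Algorithm~2 the analogous delicate step is spotting the telescoping identity $\min\{S,1/Z_{n-1}(x_n)\}=1/Z_n(x_n)$, which is what collapses the block expectations to $1$.
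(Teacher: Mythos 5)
Your argument is correct. For the substantive claim $\mathrm{E}(C_2(N))=N$ you use the same location-by-location decomposition as the paper but close it by a genuinely different mechanism. The paper writes $C_2(N)$ as $|\Phi^{+}_{\{x_1,\ldots,x_N\}}|$ plus, for each $n\ge 2$, the number of rejected functions at step $n$; it evaluates the conditional expectation of each rejected count by Campbell's formula as $\mathrm{E}\bigl(\min\{Z(x_n)^{-1},\max_{j<n}Y_n(x_j)/Z(x_j)\}\bigr)$, rewrites the minimum via $\min\{a,b\}=a+b-\max\{a,b\}$ together with $Y_n(x_n)=1$, and then appeals to Lemma 4.7 of \citet{OSZ-2013}, namely $\mathrm{E}|\Phi^{+}_{\{x_1,\ldots,x_n\}}|=\mathrm{E}\bigl(\max_{j\le n}Y_n(x_j)/Z(x_j)\bigr)$, so that the sum telescopes in expectation against the extremal-function count. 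You instead count everything simulated in pass $n$ at once and exploit the pathwise identity $\min\{S,1/Z_{n-1}(x_n)\}=1/Z_n(x_n)$, so that each pass contributes exactly $\mathrm{E}(Z(x_n)^{-1})=1$; the telescoping is absorbed into that identity and no external lemma is needed. The price is having to justify that a pass terminates at its first accepted point and that this point is $\phi^{+}_{x_n}$ --- both hold, as you note, because $Y(x_n)=1$ under $P_{x_n}$ forces the running value $Z(x_n)$ to jump to the accepted $\zeta$, after which the while-condition fails. For Algorithm 1 the paper gives no proof and only cites Proposition 4.8 of \citet{OSZ-2013}; your Mecke-formula derivation, with the identification of $\mathrm{pr}(\rho>m_{>\rho})$ and $\mathrm{pr}(\min_k Z(x_k)<\rho)$ up to a null set of levels $\rho$, supplies a correct self-contained substitute. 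The final inequality and its equality case are argued exactly as in the paper.
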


The expectation of $C_1(N)$ can be calculated similarly to Proposition 4.8 in
\citet{OSZ-2013}. The proof for the expectation of $C_2(N)$ is more difficult 
and, for the sake of brevity, it is provided as a supplementary material to 
this paper.

We conclude this section with some comments on the complexity of our algorithms
and a comparison with other exact simulation procedures. Proposition 
\ref{prop:complex} shows that, for any max-stable process, Algorithm 
\ref{algo-ppp} is more efficient than Algorithm \ref{algo-spectral} in terms of
the expected number of simulated functions. As the spectral functions follow 
either one of the laws $P_{x_1}, \ldots, P_{x_N}$ or a mixture of these, the 
simulation of a single spectral function is equally complex
in both cases. Thus, the new Algorithm \ref{algo-ppp} is always preferable to
the generalized Dieker--Mikosch algorithm, Algorithm \ref{algo-spectral}.

Next, we compare Algorithm \ref{algo-ppp} with the exact simulation algorithm 
via the normalized spectral representation proposed by \citet{OSZ-2013}. By 
Proposition 4.8 in \citet{OSZ-2013}, the number $C_3(N)$ of simulated 
normalized spectral functions satisfies
\begin{equation*}
 \mathrm{E}(C_{3}(N)) = \left(\int \max_{i=1}^N \psi(x_i) \nu({\rm d}\psi)\right)\mathrm{E}\left(\max_{i=1}^N Z(x_i)^{-1}\right)
\end{equation*}
and, thus, depends both on the geometry of the set $\{x_1,\ldots,x_N\}$ and on
the law of the max-stable process $Z$. For simulation on a large and dense 
subset of $\mathcal{X}$, the algorithm via the normalized spectral 
representation is more efficient than Algorithm \ref{algo-ppp} as 
$\mathrm{E}(C_{3}(N))$ is bounded by 
$\left(\int \sup_{x \in \mathcal{X}} \psi(x) \nu({\rm d}\psi)\right) 
\mathrm{E}\left(\sup_{x \in \mathcal{X}} Z(x)^{-1}\right)$
while $\mathrm{E} C_2(N)=N$ grows with the size of the subset. If $N$ is 
small or moderate and the vector $Z(x)$ is weakly dependent, we may also
have $\mathrm{E}(C_2(N))< \mathrm{E}(C_{3}(N))$.

Besides the efficiency in terms of the expected number of simulated functions,
we also need to take into account the complexity of the simulation of a single
spectral function. Exact and efficient simulation procedures for the 
normalized spectral function are known for some cases only (such as moving 
maxima processes), while they are not available in other cases like 
Brown--Resnick or extremal-$t$ processes. In contrast, the random functions in 
Algorithms \ref{algo-spectral} and \ref{algo-ppp} with
distributions $P_{x_0}$ in \eqref{eq:prop2}, $x_0\in\mathcal X$, 
can be simulated efficiently for the most popular max-stable process and extreme value copula models. Indeed, in Section \ref{sec:examples} below we provide closed
form expressions for various important examples.

\section{Examples} \label{sec:examples}
 
\subsection{Moving maximum process}

The parameter space is  $\mathcal{X}=\mathbb{Z}^d$ or $\mathbb{R}^d$ and 
$\lambda$ denotes the counting measure or the Lebesgue measure, respectively.
A moving maximum process on $\mathcal{X}$ is a max-stable process of the form
\begin{equation}\label{eq:MMP}
Z(x)=\max_{i\geq 1} \zeta_i h(x-\chi_i),\quad x\in\mathcal{X},
\end{equation}
where $\{(\zeta_i,\chi_i), i\geq 1\}$ is a Poisson point process on
$(0,\infty)\times\mathcal{X}$ with intensity measure 
$\zeta^{-2}\mathrm{d}\zeta\times \lambda(\mathrm{d}\chi)$ and 
$h:\mathcal{X}\to [0,\infty)$ is a continuous function satisfying 
$\int_{\mathcal{X}} h(x)\lambda(\mathrm{d}x)=1$. A famous example is the 
Gaussian extreme value process where $h$ is a multivariate Gaussian density on
$\mathbb{R}^d$ \citep{smith-1990}. 
\begin{proposition}\label{prop:mmp}
Consider the moving maximum process $\eqref{eq:MMP}$. For all $x_0\in\mathcal{X}$,
the distribution $P_{x_0}$ is equal to the distribution of the random function
\[
\frac{h(\cdot+\chi-x_0)}{h(\chi)}\quad \mbox{with} \ \chi\sim h(u)\lambda(\mathrm{d}u).
\]
\end{proposition}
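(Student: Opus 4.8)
The plan is to first identify explicitly the measure $\nu$ appearing in the spectral representation \eqref{eq:spectralrep} of the moving maximum process \eqref{eq:MMP}, and then to substitute it into the formula \eqref{eq:prop2} for $P_{x_0}$ and carry out a translation change of variables.

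First I would observe that \eqref{eq:MMP} is of the form \eqref{eq:spectralrep} with spectral functions $\psi = h(\cdot-\chi)$: the map $\chi \mapsto h(\cdot-\chi)$ pushes the intensity $\lambda(\mathrm{d}\chi)$ forward to a measure $\nu$ on $\mathcal{C}$, and the normalization \eqref{eq:normalisation} holds because, by translation invariance of $\lambda$,
\[
\int_{\mathcal{C}} \psi(x)\,\nu(\mathrm{d}\psi) = \int_{\mathcal{X}} h(x-\chi)\,\lambda(\mathrm{d}\chi) = \int_{\mathcal{X}} h(u)\,\lambda(\mathrm{d}u) = 1 .
\]
Hence Proposition \ref{prop2} applies with this $\nu$, and substituting the image measure into \eqref{eq:prop2} gives, for a Borel set $A\subset\mathcal{C}$,
\[
P_{x_0}(A) = \int_{\mathcal{X}} 1_{\{h(\cdot-\chi)/h(x_0-\chi)\in A\}}\, h(x_0-\chi)\,\lambda(\mathrm{d}\chi),
\]
where the weight $h(x_0-\chi)$ forces the integrand to vanish wherever $h(x_0-\chi)=0$, so that the ratio is well defined $\lambda$-almost everywhere on the relevant set.

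Next I would apply the change of variable $u = x_0 - \chi$, using once more the translation invariance of $\lambda$, so that $h(\cdot-\chi) = h(\cdot + u - x_0)$ and $h(x_0-\chi) = h(u)$, yielding
\[
P_{x_0}(A) = \int_{\mathcal{X}} 1_{\{h(\cdot+u-x_0)/h(u)\in A\}}\, h(u)\,\lambda(\mathrm{d}u).
\]
Since $\int_{\mathcal{X}} h\,\mathrm{d}\lambda = 1$, the measure $h(u)\,\lambda(\mathrm{d}u)$ is a probability measure, and the right-hand side is exactly the law of the random function $h(\cdot+\chi-x_0)/h(\chi)$ with $\chi \sim h(u)\lambda(\mathrm{d}u)$, which is the assertion.

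The only point requiring care is not so much an obstacle as careful bookkeeping: keeping track of the two translations and noting that translation invariance is available both for the counting measure on $\mathbb{Z}^d$ and for Lebesgue measure on $\mathbb{R}^d$, so that the argument is uniform in the two cases. Everything else is a direct substitution into \eqref{eq:prop2}.
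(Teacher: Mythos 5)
Your proof is correct and follows essentially the same route as the paper: identify $\nu$ as the pushforward of $\lambda$ under $\chi\mapsto h(\cdot-\chi)$, substitute into \eqref{eq:prop2}, and apply the change of variable $u=x_0-\chi$. The extra remarks on the normalization \eqref{eq:normalisation} and on the vanishing of the integrand where $h(x_0-\chi)=0$ are sound but not needed beyond what the paper already records.
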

All proofs of this section can be found in the supplementary material to this paper. 

\subsection{Brown--Resnick process} \label{subsec:BR}

We consider max-stable processes obtained by representation 
\eqref{eq:spectralrep} where $\nu$ is a probability measure on $\mathcal{C}$
given by
\begin{equation}\label{eq:BR}
\nu(A)=\mathrm{pr}\left[\exp\left(W(\cdot)-\sigma^2(\cdot)/2\right)\in A\right],\quad A\subset \mathcal{C}\ \mbox{Borel}
\end{equation}
with $W$ a sample-continuous centered Gaussian process on $\mathcal{X}$ with 
variance $\sigma^2(x)=\mathrm{E}[W(x)^2]$. In other words, $\nu$ is the 
distribution of the log-normal process 
$Y(x)=\exp\left(W(x)-\sigma^2(x)/2\right)$, $x\in \mathcal X$. The relation 
$\mathrm{E}[\exp\{W(x)\}]=\exp(\sigma^2(x)/2)$ ensures that $\mathrm{E}[Y(x)]=1$
and Equation \eqref{eq:normalisation} is satisfied.

An interesting phenomenon arises when $\mathcal{X}=\mathbb{Z}^d$ or 
$\mathbb{R}^d$ and $W$ has stationary increments: \citet{KSH-2009} show that 
the associated max-stable process $Z$ is then stationary with distribution 
depending only on the semi-variogram
\[
\gamma(h)=\frac{1}{2}\mathrm{E}\left[\{W(h)-W(0)\}^2\right],\quad h\in\mathcal{X}.
\]
The stationary max-stable process $Z$ is called a Brown--Resnick process. 
However, our results apply both in the stationary and non-stationary case 
\citep[cf.,][]{kab2011} and 
except stated otherwise we do not assume that $W$ has stationary increments.

\begin{proposition}\label{prop:BR}
Consider the Brown--Resnick type model $\eqref{eq:BR}$. For all 
$x_0\in\mathcal{X}$, the distribution $P_{x_0}$ is equal to the distribution of
the log-normal process
\[
\widetilde Y(x)=\exp\left(W(x)-W(x_0)-\frac{1}{2}\mathrm{Var}[W(x)-W(x_0)] \right),\quad x\in\mathcal{X}.
\]
\end{proposition}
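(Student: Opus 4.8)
The plan is to read $P_{x_0}$ off Proposition~\ref{prop2} as an exponentially tilted law and then recognise the tilt as a Cameron--Martin shift of the underlying Gaussian process. Since $\nu$ in \eqref{eq:BR} is the law of $Y=\exp(W-\sigma^2/2)$, Eq.~\eqref{eq:prop2} gives, for every Borel $A\subset\mathcal C$,
\[
P_{x_0}(A)=\mathrm E\!\left[1_{\{Y/Y(x_0)\in A\}}\,\exp\!\Big(W(x_0)-\tfrac12\sigma^2(x_0)\Big)\right].
\]
The weight $D:=\exp(W(x_0)-\tfrac12\sigma^2(x_0))$ is strictly positive with $\mathrm E[D]=1$, so it is the density of a probability measure $\widetilde{\mathrm{pr}}$ with respect to the original one, and $P_{x_0}$ is exactly the law of the process $Y/Y(x_0)$ under $\widetilde{\mathrm{pr}}$. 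Because $\mathcal X\subset\mathbb R^d$ is compact, hence separable, a Borel measure on $\mathcal C$ is pinned down by its finite-dimensional distributions, and moreover $\widetilde{\mathrm{pr}}\ll\mathrm{pr}$ keeps us within sample-continuous processes; so it suffices to identify the finite-dimensional laws of $(Y(x)/Y(x_0))_{x\in\mathcal X}$ under $\widetilde{\mathrm{pr}}$.

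Next I would carry out the Cameron--Martin step. Fixing finitely many points and computing the joint Laplace transform $\widetilde{\mathrm E}[\exp(\sum_j t_j W(x_j))]=\mathrm E[\exp(\sum_j t_jW(x_j)+W(x_0)-\tfrac12\sigma^2(x_0))]$ is a routine Gaussian calculation, and it shows that under $\widetilde{\mathrm{pr}}$ the process $W$ is still Gaussian with the \emph{same} covariance function but with mean $c(x):=\mathrm E[W(x)W(x_0)]$ (note $c(x_0)=\sigma^2(x_0)$). Equivalently, the recentered process $W':=W-c$ is, under $\widetilde{\mathrm{pr}}$, centered Gaussian with the covariance of $W$, so $(W'(x))_{x\in\mathcal X}$ under $\widetilde{\mathrm{pr}}$ has the same law as $(W(x))_{x\in\mathcal X}$ under $\mathrm{pr}$.

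It then remains to substitute and chase the quadratic terms. Writing $Y(x)/Y(x_0)=\exp\big(W(x)-W(x_0)-\tfrac12\sigma^2(x)+\tfrac12\sigma^2(x_0)\big)$ and replacing $W$ by $W'+c$, I would use $c(x)-c(x_0)=\mathrm E[W(x)W(x_0)]-\sigma^2(x_0)$ together with $\mathrm{Var}[W(x)-W(x_0)]=\sigma^2(x)+\sigma^2(x_0)-2\mathrm E[W(x)W(x_0)]$ to see that the deterministic part collapses to $-\tfrac12\mathrm{Var}[W(x)-W(x_0)]$, giving $\widetilde{\mathrm{pr}}$-almost surely
\[
\frac{Y(x)}{Y(x_0)}=\exp\!\Big(W'(x)-W'(x_0)-\tfrac12\mathrm{Var}\big[W(x)-W(x_0)\big]\Big),\qquad x\in\mathcal X.
\]
Since $W'$ under $\widetilde{\mathrm{pr}}$ has the law of $W$ under $\mathrm{pr}$ and $\mathrm{Var}[W'(x)-W'(x_0)]=\mathrm{Var}[W(x)-W(x_0)]$, the right-hand side has the law of $\widetilde Y$, which is the assertion.

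The main obstacle is the Cameron--Martin step: verifying that exponential tilting by $D$ leaves the covariance of $W$ untouched and merely shifts the mean by $c(\cdot)$, and then keeping the $\sigma^2$, covariance and variance terms straight after recentering. Everything else is purely computational, and the passage from finite-dimensional distributions to the law on $\mathcal C$ is standard because $\widetilde{\mathrm{pr}}$ is absolutely continuous with respect to $\mathrm{pr}$.
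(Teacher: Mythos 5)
Your proposal is correct and follows essentially the same route as the paper: identify $P_{x_0}$ via \eqref{eq:prop2} as the law of $Y/Y(x_0)$ under the exponentially tilted measure with density $e^{W(x_0)-\sigma^2(x_0)/2}$, verify by a finite-dimensional Laplace-transform computation that the tilt shifts the mean of $W$ by $c(x_0,\cdot)$ while preserving its covariance (the paper's Lemma~\ref{lem:BR}), and then collapse the deterministic terms to $-\tfrac12\mathrm{Var}[W(x)-W(x_0)]$. The algebra in your final step checks out.
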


\begin{remark}{\rm
It is easy to deduce from the proposition that in the Brown--Resnick case where
$W$ has stationary increments, then $\widetilde Y$ has the same distribution as 
\[
\exp\left(W(x-x_0)-W(0)-\frac{1}{2}\gamma(x-x_0)\right),\quad x\in\mathcal{X}.
\]
}
\end{remark}

\begin{remark}{\rm
    The finite dimensional margins of Brown--Resnick processes
    are H\"usler--Reiss distributions \citep[cf.,][]{huesler-reiss-1989}
    and the above therefore
    provides a method for their exact simulation.
}
\end{remark}

\subsection{Extremal-$t$ process}

We consider the so called extremal-$t$ max-stable process \citep[cf.,][]{opitz-2013} defined by representation
\eqref{eq:spectralrep} with $\nu$  the distribution of the random process
\begin{equation}\label{eq:extremal-t}
Y(x)=c_\alpha \max(0,W(x))^\alpha,\quad x\in\mathcal{X},
\end{equation}
where $\alpha > 0$, $c_\alpha={\pi}^{1/2}2^{-(\alpha-2)/2}/\Gamma\left(\frac{1+\alpha}{2}\right)$,
and $W$ a sample-continuous centered Gaussian process on $\mathcal{X}$ with 
unit variance and covariance function $c$. The constant $c_\alpha$ is such that $\mathrm{E}[Y(x)]\equiv 1$ so
that Equation \eqref{eq:normalisation} is satisfied. For $\alpha = 1$, the 
corresponding max-stable process in \eqref{eq:spectralrep} coincides with the
widely used extremal Gaussian process by \citet{schlather-2002}. 

\begin{proposition}\label{prop:extremal-t}
Consider the extremal-$t$ model $\eqref{eq:extremal-t}$. For all 
$x_0\in\mathcal{X}$, the distribution $P_{x_0}$ is equal to the 
distribution of $\max(T,0)^\alpha$, where $T=(T(x))_{x\in\mathcal{X}}$ is a
Student process with $\alpha+1$ degrees of freedom, location and scale 
functions given respectively by
\[
\mu(x)=c(x_0,x)\quad \mbox{and}\quad \hat c(x_1,x_2)=\frac{c(x_1,x_2)-c(x_0,x_1)c(x_0,x_2)}{(\alpha+1)}.
\]
\end{proposition}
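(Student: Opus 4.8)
The plan is to compute the distribution $P_{x_0}$ directly from its definition \eqref{eq:prop2}, exploiting the structure of the extremal-$t$ spectral process \eqref{eq:extremal-t}. By Proposition \ref{prop2} we must evaluate, for Borel $A \subset \mathcal{C}$,
\[
P_{x_0}(A) = \int_{\mathcal{C}} 1_{\{f/f(x_0) \in A\}}\, f(x_0)\, \nu(\mathrm{d}f)
 = c_\alpha\, \mathrm{E}\!\left[ \max(0, W(x_0))^\alpha\, 1_{\{ (\max(0,W(\cdot))/\max(0,W(x_0)))^\alpha \in A\}} \right],
\]
where $W$ is the unit-variance Gaussian process with covariance $c$. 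On the event $\{W(x_0) > 0\}$ (which carries all the mass since $W(x_0)^\alpha$ vanishes otherwise), the ratio $\max(0,W(x))/\max(0,W(x_0)) = \max(0, W(x)/W(x_0))$, so everything is governed by the law of $W$ reweighted by $\max(0,W(x_0))^\alpha$. First I would introduce a change of measure: the density $c_\alpha \max(0, w_0)^\alpha$ against the standard normal density of $W(x_0)$ should be recognized, after completing relevant Gaussian integrals, as producing a (generalized) chi-type weighting on $|W(x_0)|$ combined with a standard Gaussian conditioning.

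The key step is to disintegrate with respect to $W(x_0)$. Write $W(x) = c(x_0,x) W(x_0) + R(x)$, where $R = (R(x))_{x \in \mathcal{X}}$ is the Gaussian process obtained by projecting out $W(x_0)$; it is independent of $W(x_0)$, centered, with covariance $c(x_1,x_2) - c(x_0,x_1)c(x_0,x_2)$, and $R(x_0) = 0$. Then, conditionally on $W(x_0) = v > 0$,
\[
\frac{W(x)}{W(x_0)} = c(x_0,x) + \frac{R(x)}{v},
\]
so that $\max(0, W(\cdot)/W(x_0))^\alpha = \max\!\big(0,\, c(x_0,\cdot) + R(\cdot)/v\big)^\alpha$. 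Integrating over $v$ against the weight $c_\alpha v^\alpha \varphi(v)\,\mathrm{d}v$ (with $\varphi$ the standard normal density), the normalizing constant $c_\alpha$ is precisely chosen so that $c_\alpha \int_0^\infty v^\alpha \varphi(v)\,\mathrm{d}v = 1$ — this is the same computation that makes $\mathrm{E}[Y(x_0)] = 1$ — so $v$ has the size-biased half-normal density proportional to $v^\alpha \varphi(v)$ on $(0,\infty)$. The remaining task is to identify $c(x_0,\cdot) + R(\cdot)/v$, when $R$ is the above Gaussian process and $1/v^2$ follows the law coming from that size-biased density, as a Student process. One recognizes that if $v^2$ has (up to scale) a $\chi^2_{\alpha+1}$ distribution, then $R(\cdot)/v$ scaled appropriately is a multivariate Student field with $\alpha+1$ degrees of freedom; matching constants shows the scale covariance is $\big(c(x_1,x_2) - c(x_0,x_1)c(x_0,x_2)\big)/(\alpha+1)$, i.e.\ $\hat c$, and the added deterministic mean is $c(x_0,\cdot) = \mu(\cdot)$. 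Hence $c(x_0,\cdot) + R(\cdot)/v$ is exactly the Student process $T$ in the statement, and $\max(T,0)^\alpha$ has law $P_{x_0}$.

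The main obstacle I anticipate is the bookkeeping in the last step: verifying that the size-biased half-normal weight on $v$ translates, after the substitution $w = 1/v$, into precisely the inverse-gamma / $\chi^2_{\alpha+1}$ mixing distribution that defines a $t$-process with $\alpha+1$ degrees of freedom, and tracking the resulting scale constant so that it comes out as $1/(\alpha+1)$ rather than off by a factor. Concretely, one writes $v^2 = (\alpha+1)\,\sigma^{-2}$ (or an equivalent reparametrization) and checks that the induced density of $\sigma^2$ is the inverse-chi-squared with $\alpha+1$ degrees of freedom; then the standard representation of a Student process as a Gaussian process divided by an independent $\sqrt{\chi^2_{\alpha+1}/(\alpha+1)}$ identifies $R(\cdot)/v$ with a mean-zero $t$-process of scale covariance $\mathrm{Cov}(R(x_1),R(x_2))/(\alpha+1) = \hat c(x_1,x_2)$. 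The rest — the claim that $v^\alpha \varphi(v)$ normalized by $c_\alpha$ integrates to one, and that the event $\{W(x_0) > 0\}$ captures the full mass — is the routine Gaussian-integral verification already implicit in the definition of $c_\alpha$.
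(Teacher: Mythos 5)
Your argument is correct, and it reaches the result by a genuinely different route from the paper. The paper proves an auxiliary lemma by brute-force density computation: it writes the joint Gaussian density of $(W(x_0),W(x_1),\dots,W(x_k))$, tilts by $c_\alpha \max(0,y_0)^\alpha$, integrates out $y_0$ via a Gamma integral, and then performs a block inversion of $\widetilde\Sigma$ to recognize the multivariate Student density with $\alpha+1$ degrees of freedom, location $\Sigma_{k,0}$ and scale $(\Sigma_k-\Sigma_{k,0}\Sigma_{0,k})/(\alpha+1)$. You instead use the orthogonal decomposition $W(x)=c(x_0,x)W(x_0)+R(x)$ with $R\perp W(x_0)$, observe that the tilt $c_\alpha\max(0,W(x_0))^\alpha$ only reweights the marginal of $W(x_0)$ (leaving $R$ unchanged and still independent), check that the reweighted $v=W(x_0)$ satisfies $v^2\sim\chi^2_{\alpha+1}$ exactly (your hedge ``up to scale'' is unnecessary — the constant $c_\alpha$ normalizes $v^\alpha\varphi(v)1_{\{v>0\}}$ precisely to the $\chi_{\alpha+1}$ density), and then invoke the standard mixture representation $T=\mu+Z\sqrt{\nu/S}$, $Z\sim N(0,\hat c)$, $S\sim\chi^2_\nu$, $\nu=\alpha+1$, to identify $c(x_0,\cdot)+R(\cdot)/v$ as the claimed Student process. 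The bookkeeping you flag does close: $R/v=(R/\sqrt{\alpha+1})\sqrt{(\alpha+1)/v^2}$ yields scale covariance $\mathrm{Cov}(R(x_1),R(x_2))/(\alpha+1)=\hat c(x_1,x_2)$. Your route is arguably cleaner — it avoids the explicit block matrix inversion and the non-singularity assumption on $\widetilde\Sigma$ that the paper's density argument needs — at the cost of relying on the chi-squared mixture characterization of the Student family rather than producing the density explicitly.
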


\subsection{Multivariate extreme value distributions}

In this section, we review some popular models for multivariate extreme value
distributions, i.e., the case when $\mathcal X = \{1,\dots,N\}$ in \eqref{eq:spectralrep} is a finite set for some fixed $N\in\mathbb N$. 
For these \mbox{models}, we explicitly calculate the measure $P_{j_0}$ 
for any $j_0 = 1,\dots,N$.
Unless otherwise stated, all random vectors are $N$-dimensional in this section.
Multivariate extreme value distributions differ from extreme value copulas
only by a change in the marginal distribution, so that our methodology applies
directly to exact simulation of extreme value copulas. For more details on
the models, we refer to \citet{gudendorf-segers-2010}.
\medskip

\textbf{Logistic model}

The symmetric logistic model in dimension $N$ with parameter $\theta\in (0,1)$
corresponds to the max-stable random vector with cumulative distribution function
\begin{equation}\label{eq:logistic}
\mathrm{pr}[Z\leq z]=\exp\left(-\left(\sum\nolimits_{j=1}^N z_j^{-1/\theta} \right)^\theta\right),
        \quad z=(z_1,\ldots,z_N)\in(0,\infty)^N.
\end{equation}

\begin{proposition}\label{prop:logistic}
Let $\beta=1/\theta$. In the logistic model \eqref{eq:logistic}, the probability
measure $P_{j_0}$ for any $j_0 = 1,\dots,N$ is equal to the distribution of the random vector
\[
 \left(\frac{F_1}{F_{j_0}},\ldots,\frac{F_N}{F_{j_0}} \right)
\]
where $F_1,\ldots,F_N$ are independent, $F_j$, $j\neq j_0$, follows a 
$\mathrm{Frechet}(\beta,c_\beta)$ distribution with scale parameter
$c_\beta=\Gamma(1-1/\beta)^{-1}$ and $(F_{j_0}/c_\beta)^{-\beta}$ follows a 
$\mathrm{Gamma}(1-1/\beta,1)$ distribution.
\end{proposition}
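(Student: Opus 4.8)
The plan is to apply Proposition \ref{prop2} directly: the measure $P_{j_0}$ is given by $P_{j_0}(A) = \int_{\mathcal C} 1_{\{f/f(j_0)\in A\}} f(j_0)\,\nu(\mathrm df)$, so I need an explicit handle on the spectral measure $\nu$ of the logistic model. First I would recall (or derive) the standard De~Haan-type representation of the symmetric logistic distribution: with $\beta = 1/\theta > 1$, one has the well-known construction $Z_j = \max_{i\ge1}\zeta_i W_j^{(i)}$ where the $W^{(i)}$ are i.i.d.\ copies of a vector $W = (W_1,\dots,W_N)$ whose law is $\nu$ (after the right normalization), and the classical choice is $W_j = S_j / \Gamma(1-1/\beta)$ with $S_1,\dots,S_N$ i.i.d.\ positive $\beta$-stable (equivalently, i.i.d.\ $\mathrm{Frechet}(\beta)$ up to scaling, since a Fréchet$(\beta)$ variable to a suitable power is stable). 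The key computation is then to verify that $\int_{(0,\infty)^N}\exp(-\sum_j z_j^{-1} w_j)\,\nu(\mathrm dw)$ reproduces the exponent measure $\left(\sum_j z_j^{-1/\theta}\right)^\theta$ of \eqref{eq:logistic}; this is a Laplace-transform identity for stable / Weibull-tailed variables and is essentially the Gumbel--Hougaard computation.

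Next I would carry out the change of measure. Writing $\nu$ as the law of $W$, Proposition \ref{prop2} says $P_{j_0}$ is the law of $W/W_{j_0}$ under the size-biased (by $W_{j_0}$) reweighting of $\nu$; that is, $P_{j_0}$ is the law of $(W_1/W_{j_0},\dots,W_N/W_{j_0})$ where now $W$ is distributed according to $w_{j_0}\,\nu(\mathrm dw)$ (a probability measure by \eqref{eq:normalisation}). Since the coordinates $W_1,\dots,W_N$ are independent, size-biasing by $W_{j_0}$ only changes the marginal of the $j_0$-th coordinate, leaving the other $N-1$ coordinates with their original Fréchet law and making them independent of the new $j_0$-th coordinate. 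So the $F_j$, $j\ne j_0$, keep the $\mathrm{Frechet}(\beta,c_\beta)$ law with $c_\beta=\Gamma(1-1/\beta)^{-1}$, exactly as claimed.

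The remaining point — which I expect to be the main obstacle, or at least the only part requiring genuine care — is identifying the size-biased marginal of the $j_0$-th coordinate. I would compute the density of $w\,\nu_{j_0}(\mathrm dw)$ where $\nu_{j_0}$ is the $\mathrm{Frechet}(\beta,c_\beta)$ density, i.e.\ $\nu_{j_0}(\mathrm dw) \propto w^{-\beta-1}\exp(-(w/c_\beta)^{-\beta})\,\mathrm dw$; multiplying by $w$ gives density $\propto w^{-\beta}\exp(-(w/c_\beta)^{-\beta})$. Substituting $u = (w/c_\beta)^{-\beta}$ (so $w = c_\beta u^{-1/\beta}$, $\mathrm dw \propto u^{-1/\beta-1}\,\mathrm du$) turns this into a density proportional to $u^{(\beta-1)/\beta}\cdot u^{-1/\beta-1}\,\mathrm du = u^{-1/\beta}\,\mathrm du$ times $e^{-u}$, i.e.\ $u^{(1-1/\beta)-1}e^{-u}\,\mathrm du$, which is the $\mathrm{Gamma}(1-1/\beta,1)$ density. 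Hence $(F_{j_0}/c_\beta)^{-\beta}\sim\mathrm{Gamma}(1-1/\beta,1)$, completing the identification; I would finish by remarking that the normalizing constants automatically match because $P_{j_0}$ is a probability measure and $\mathrm E[W_{j_0}]=1$ by \eqref{eq:normalisation}, and that $(F_1/F_{j_0},\dots,F_N/F_{j_0})$ has $j_0$-th coordinate identically $1$, consistent with $P_{j_0}$ being supported on $\{f: f(j_0)=1\}$.
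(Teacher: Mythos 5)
Your proof follows essentially the same route as the paper's: identify the spectral measure of the logistic model as the law of a vector with independent $\mathrm{Frechet}(\beta,c_\beta)$ components, apply Proposition \ref{prop2} to see that $P_{j_0}$ is the law of $F/F_{j_0}$ under the measure size-biased by the $j_0$-th coordinate, observe that independence confines the reweighting to the $j_0$-th marginal, and identify that marginal via the substitution $u=(w/c_\beta)^{-\beta}$ as giving $(F_{j_0}/c_\beta)^{-\beta}\sim\mathrm{Gamma}(1-1/\beta,1)$. Two small corrections: the identity you need to verify for the spectral representation is $\int \max_{j} (w_j/z_j)\,\nu(\mathrm{d}w)=\bigl(\sum_j z_j^{-\beta}\bigr)^{1/\beta}$, i.e.\ the exponent measure of $[0,z]^c$, and not the Laplace functional $\int \exp\bigl(-\sum_j z_j^{-1}w_j\bigr)\,\nu(\mathrm{d}w)$ --- the latter is the identity behind the Archimedean/frailty construction with a single positive stable variable and is not the relevant computation here (also, a power of a Fr\'echet variable is again Fr\'echet, not stable); and in your final computation the factor $w^{-\beta}$ transforms into $u^{1}$ rather than $u^{(\beta-1)/\beta}$, although the exponent $u^{-1/\beta}$ you end up with, and hence the $\mathrm{Gamma}(1-1/\beta,1)$ conclusion, are correct.
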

\begin{remark}
  The asymmetric logistic distribution can be represented as the mixture of
  symmetric logistic distributions; see Theorem 1 in \cite{ste2003}, for instance.
  As a consequence, Proposition \ref{prop:logistic} also enables exact
  simulation of asymmetric logistic distributions.
\end{remark}

\textbf{Negative logistic model}

The negative logistic model in dimension $N$ with parameter $\theta>0$ 
corresponds to the max-stable random vector $Z$ with cumulative distribution
function
\begin{equation}\label{eq:neglogistic}
\mathrm{pr}[Z\leq z]=\exp\left( \sum_{\emptyset\neq J\subset\{1,\ldots,N\}}(-1)^{|J|} 
        \left(\sum\nolimits_{j\in J} z_j^{\theta} \right)^{-1/\theta}\right),\quad z\in(0,\infty)^N.
\end{equation}

\begin{proposition}\label{prop:neglogistic}
In the negative logistic model \eqref{eq:neglogistic}, the probability measure
$P_{j_0}$ for any $j_0 = 1,\dots,N$ is equal to the distribution of the random vector
\[
 \left(\frac{W_1}{W_{j_0}},\ldots,\frac{W_N}{W_{j_0}} \right)
\]
where $W_1,\ldots,W_N$ are independent, $W_j$, $j\neq j_0$, follows a 
$\mathrm{Weibull}(\theta,c_\theta)$ distribution with scale parameter 
$c_\theta=\Gamma(1+1/\theta)^{-1}$ and  $(W_{j_0}/c_\theta)^{\theta}$ follows a
$\Gamma(1+1/\theta,1)$ distribution.
\end{proposition}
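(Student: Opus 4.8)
The plan is to verify the candidate distribution by realising it inside a Poisson point process representation and checking that this reproduces the negative logistic law \eqref{eq:neglogistic}. Let $W_1,\dots,W_N$ be independent as in the statement and set $Y=\big(W_1/W_{j_0},\dots,W_N/W_{j_0}\big)$, a random vector supported on $\{f:f(j_0)=1\}$; write $Q$ for its law, the candidate for $P_{j_0}$. Let $\{\zeta_i\}_{i\ge1}$ be a Poisson point process on $(0,\infty)$ with intensity $\zeta^{-2}\,\mathrm d\zeta$ and let $Y_i$ be independent copies of $Y$, independent of $\{\zeta_i\}$, so that $\{\zeta_iY_i\}_{i\ge1}$ is a Poisson point process with intensity $\tilde\mu(A)=\int_{\mathcal C}\int_0^\infty 1_{\{\zeta f\in A\}}\zeta^{-2}\,\mathrm d\zeta\,Q(\mathrm df)$. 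I would show that the componentwise maximum $\max_{i\ge1}\zeta_iY_i$ has distribution \eqref{eq:neglogistic}. This forces $\tilde\mu$ to agree with the exponent measure $\mu$ of $Z$ on every set $\{g:g\not\leq z\}$, $z\in(0,\infty)^N$, so, since the exponent measure of a max-stable distribution is determined by the distribution, $\tilde\mu=\mu$. As $\tilde\mu$ is carried by $\{g:g(j_0)>0\}$ and decomposes in polar coordinates at $j_0$ as $r^{-1}\,\mathrm dr\,Q(\mathrm d\omega)$, comparison with \eqref{eq:prop3} — which gives the same decomposition with $P_{j_0}$ in place of $Q$ — yields $Q=P_{j_0}$, which is the assertion.

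Granting this, the distributional identity reduces to one moment computation. Indeed $\mathrm{pr}[\max_i\zeta_iY_i\leq z]=\exp\!\big(-\tilde\mu(\{g:g(j)>z_j\text{ for some }j\})\big)$, and integrating out $\zeta$ gives
\[
\tilde\mu\big(\{g:g(j)>z_j\text{ for some }j\}\big)
=\int_{\mathcal C}\max_{1\le j\le N}\frac{f(j)}{z_j}\,Q(\mathrm df)
=\mathrm{E}\!\left[\frac{1}{W_{j_0}}\max_{1\le j\le N}\frac{W_j}{z_j}\right],
\]
so it is enough to prove that this expectation equals $\sum_{\emptyset\ne J\subset\{1,\dots,N\}}(-1)^{|J|+1}\big(\sum_{j\in J}z_j^{\theta}\big)^{-1/\theta}$, the negative logarithm of the right-hand side of \eqref{eq:neglogistic}. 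Expanding the maximum by inclusion--exclusion, $\max_j a_j=\sum_{\emptyset\ne J}(-1)^{|J|+1}\min_{k\in J}a_k$, the claim further reduces to
\[
\mathrm{E}\!\left[\frac{1}{W_{j_0}}\min_{k\in J}\frac{W_k}{z_k}\right]=\Big(\sum_{k\in J}z_k^{\theta}\Big)^{-1/\theta},\qquad \emptyset\ne J\subset\{1,\dots,N\}.
\]

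The computation rests on two elementary facts: the minimum of independent $\mathrm{Weibull}(\theta,\cdot)$ variables is again $\mathrm{Weibull}(\theta,\cdot)$ with $\ell_\theta$-aggregated scale, and $(W_{j_0}/c_\theta)^{\theta}\sim\Gamma(1+1/\theta,1)$ has Laplace transform $u\mapsto(1+u)^{-(1+1/\theta)}$. For $j_0\notin J$ the factor $1/W_{j_0}$ is independent of the minimum and has mean $1$ (a one-line Gamma-moment check using $c_\theta=\Gamma(1+1/\theta)^{-1}$), and the claim becomes the mean of a $\mathrm{Weibull}(\theta,c_\theta(\sum_{k\in J}z_k^\theta)^{-1/\theta})$ variable, which is indeed $(\sum_{k\in J}z_k^\theta)^{-1/\theta}$. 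For $j_0\in J$, set $s=\sum_{k\in J\setminus\{j_0\}}z_k^{\theta}$ and let $M$ be the corresponding $\mathrm{Weibull}(\theta,c_\theta s^{-1/\theta})$ minimum over $J\setminus\{j_0\}$, independent of $W_{j_0}$; then the expectation equals $\mathrm{E}[\min(z_{j_0}^{-1},M/W_{j_0})]$, conditioning on $W_{j_0}$ and using the Gamma Laplace transform rewrites it as $\int_0^{1/z_{j_0}}(1+t^{\theta}s)^{-(1+1/\theta)}\,\mathrm dt$, and the substitution $w=t^{\theta}s/(1+t^{\theta}s)$ collapses this to $\frac{1}{\theta s^{1/\theta}}\int_0^{s/(z_{j_0}^{\theta}+s)}w^{1/\theta-1}\,\mathrm dw=(z_{j_0}^{\theta}+s)^{-1/\theta}=\big(\sum_{k\in J}z_k^{\theta}\big)^{-1/\theta}$, the case $J=\{j_0\}$ (i.e.\ $s=0$) being trivial.

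I expect the case $j_0\in J$ to be the only genuine obstacle: it is where the Weibull tail of $M$ and the fact that $W_{j_0}$ is a power of a Gamma variable combine into the Gamma Laplace transform, and where the resulting one-dimensional integral must be recognised after the substitution. Everything else — the inclusion--exclusion bookkeeping, the Weibull-minimum identity, the normalisations $\mathrm{E}[W_j]=\mathrm{E}[1/W_{j_0}]=1$, and the uniqueness of the exponent measure — is routine, and the argument runs in parallel to the logistic case (Proposition \ref{prop:logistic}), with Fr\'echet tails and index $\beta$ replaced by Weibull tails and index $\theta$.
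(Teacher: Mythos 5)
Your proof is correct, but it runs in the opposite direction from the paper's. The paper first exhibits the spectral representation $Z=\max_{i\geq 1}\zeta_i W_i$ with the $W_i$ having \emph{independent identically distributed} $\mathrm{Weibull}(\theta,c_\theta)$ coordinates (verified by essentially the same inclusion--exclusion computation of $\mathrm{E}[\max_j W_j/z_j]$ that you perform), and then reads off $P_{j_0}$ directly from the tilting formula \eqref{eq:prop2}: multiplying the product Weibull density by $y_{j_0}$ leaves the margins $j\neq j_0$ untouched and turns the $j_0$th margin into the density of $c_\theta G^{1/\theta}$ with $G\sim\Gamma(1+1/\theta,1)$, after which one normalizes by the $j_0$th coordinate. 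You instead start from the candidate law $Q$ of the normalized vector, verify that the associated Poisson point process reproduces \eqref{eq:neglogistic}, and conclude $Q=P_{j_0}$ from uniqueness of the exponent measure together with the polar decomposition \eqref{eq:prop3}; this identification step is sound, since both laws are carried by $\{f:f(j_0)=1\}$ and the angular component of the decomposition is unique. Your key computation --- reducing via $\max_j a_j=\sum_{\emptyset\neq J}(-1)^{|J|+1}\min_{k\in J}a_k$ to $\mathrm{E}[W_{j_0}^{-1}\min_{k\in J}W_k/z_k]=(\sum_{k\in J}z_k^\theta)^{-1/\theta}$, and handling $j_0\in J$ via the $\Gamma(1+1/\theta,1)$ Laplace transform and the substitution $w=t^\theta s/(1+t^\theta s)$ --- checks out. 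The price of your route is that the one genuinely nontrivial integral arises precisely because the $j_0$th coordinate is a power of a Gamma variable rather than Weibull; the paper's route avoids this integral entirely by tilting at the level of densities before forming any ratios, at the cost of having to produce the untilted i.i.d.\ Weibull representation first. Your approach has the merit of being a self-contained verification that would work even if one had only guessed the answer, and it makes explicit the uniqueness argument that the paper leaves implicit in its appeal to \eqref{eq:prop2}.
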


\textbf{Dirichlet mixture model}

The Dirichlet mixture model was introduced by \cite{boldi-davison-2007}. In 
dimension $N$, the model corresponds to the max-stable random vector given by
\begin{equation}\label{eq:dirichlet-model}
Z=\max_{i\geq 1} \zeta_i (NY_i)
\end{equation}
where the $Y_i$'s are independent identically distributed random vectors on the
simplex 
\begin{equation*}
S_{N-1}=\left\{y\in[0,1]^n: \, \sum\nolimits_{j=1}^N y_j=1\right\}.
\end{equation*}
The distribution of each $Y_i$ is a mixture of $m$ Dirichlet models, i.e.\ its
Lebesgue density is of the form
\begin{equation} \label{eq:spec-mixt-dirichlet}
 h({y}) = \sum_{k=1}^m \pi_k \text{diri}({y} \mid \alpha_{1k}, \ldots,\alpha_{Nk}), 
                 \quad {y} = (w_1, \ldots, w_N) \in S_{N-1},
\end{equation}
where $\pi_k \geq 0$, $k = 1,\ldots,m$ such that $\sum_{k=1}^m \pi_k = 1$,
$\alpha_{ik} > 0$, $i=1,\ldots,N$, $k=1,\ldots,m$, and
\begin{equation} \label{eq:dirichlet-dens}
 \text{diri}({y} \mid \alpha_1, \ldots,\alpha_N) =  \frac{1}{B(\alpha)}\prod_{j=1}^N y_j^{\alpha_j-1},
                                  \quad B(\alpha)=\frac{\prod_{j=1}^N \Gamma(\alpha_j)}{\Gamma(\sum_{j=1}^N \alpha_j)}.
\end{equation}
Here, the parameters $\pi_k$ and $\alpha_{ik}$, $i=1,\ldots,N$, $k=1,\ldots,m$,
are such that
\[
 \mathrm{E}[Y_j]= \sum_{k=1}^m \pi_k \frac{\alpha_{jk}}{\sum_{i=1}^N \alpha_{ik}}=\frac{1}{N}, \quad j=1,\ldots,N.
\]

\begin{proposition}\label{prop:dirichlet}
In the Dirichlet model \eqref{eq:dirichlet-model}, we have for any $j_0 = 1,\dots,N$ that
$P_{j_0} = \sum_{k=1}^m \hat \pi_k P_{j_0}^{(k)}$ where 
$\hat \pi_k = \pi_k \alpha_{j_0k}/(\sum_{i=1}^N \alpha_{ik})$ and 
$P_{j_0}^{(k)}$ is equal to the distribution of the random vector
\[
 \left(\frac{G_1^{(k)}}{G_{j_0}^{(k)}},\ldots,\frac{G_N^{(k)}}{G_{j_0}^{(k)}} \right)
\]
and $G_1^{(k)},\ldots,G_N^{(k)}$ are independent random variables with Gamma
distribution
\[
 G_{j_0}^{(k)}\sim \mathrm{Gamma}(\alpha_{j_0k}+1,1)\quad \mbox{and}\quad  G_{j}\sim \mathrm{Gamma}(\alpha_{jk},1),\quad j\neq j_0.
\]
\end{proposition}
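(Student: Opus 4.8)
The plan is to derive everything from identity \eqref{eq:prop2} of Proposition \ref{prop2} together with two standard facts about Dirichlet vectors. For the model \eqref{eq:dirichlet-model} the measure $\nu$ is the law of $\psi=NY$, where $Y$ has the mixture density $h$ of \eqref{eq:spec-mixt-dirichlet}; the normalisation $\mathrm{E}[Y_j]=1/N$ is precisely what makes \eqref{eq:normalisation} hold. Substituting $f=NY$ into \eqref{eq:prop2}, for any bounded Borel $g$ on $\mathcal{C}$,
\[
\int g\,\mathrm{d}P_{j_0} \;=\; \int f(j_0)\,g\bigl(f/f(j_0)\bigr)\,\nu(\mathrm{d}f) \;=\; N\,\mathrm{E}\bigl[Y_{j_0}\,g(Y/Y_{j_0})\bigr],
\]
which is well defined because $\alpha_{j_0k}>0$ forces $Y_{j_0}>0$ almost surely. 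Thus $P_{j_0}$ is, up to the factor $N$, the $Y_{j_0}$-size-biased law of the normalised vector $Y/Y_{j_0}$, and by linearity in the mixture it suffices to compute $\mathrm{E}[Y_{j_0}\,g(Y/Y_{j_0})]$ when $Y\sim\mathrm{diri}(\alpha_{1k},\dots,\alpha_{Nk})$ and then recombine the components with weights $\pi_k$.

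For one Dirichlet component I would use the size-biasing identity that follows from comparing the normalising constants $B(\cdot)$ in \eqref{eq:dirichlet-dens}:
\[
y_{j_0}\,\mathrm{diri}(y\mid\alpha_{1k},\dots,\alpha_{Nk}) \;=\; \frac{\alpha_{j_0k}}{\sum_{i=1}^N\alpha_{ik}}\;\mathrm{diri}\bigl(y\mid\alpha_{1k},\dots,\alpha_{j_0k}+1,\dots,\alpha_{Nk}\bigr),
\]
so that size-biasing by $Y_{j_0}$ merely increments the $j_0$-th parameter and produces the scalar $\alpha_{j_0k}/\sum_{i}\alpha_{ik}$. Hence $\mathrm{E}_k[Y_{j_0}g(Y/Y_{j_0})]=(\alpha_{j_0k}/\sum_i\alpha_{ik})\,\mathrm{E}[g(\widetilde Y/\widetilde Y_{j_0})]$ with $\widetilde Y\sim\mathrm{diri}(\alpha_{1k},\dots,\alpha_{j_0k}+1,\dots,\alpha_{Nk})$. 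Next I would invoke the representation of a Dirichlet vector as $(G_1,\dots,G_N)/\sum_i G_i$ with independent $G_j\sim\mathrm{Gamma}(\cdot\,,1)$: the normalising sum cancels in the ratio to the $j_0$-th coordinate, so $\widetilde Y/\widetilde Y_{j_0}$ has the law of $(G_1^{(k)}/G_{j_0}^{(k)},\dots,G_N^{(k)}/G_{j_0}^{(k)})$ with $G_{j_0}^{(k)}\sim\mathrm{Gamma}(\alpha_{j_0k}+1,1)$ and $G_j^{(k)}\sim\mathrm{Gamma}(\alpha_{jk},1)$ for $j\neq j_0$, that is, exactly $P_{j_0}^{(k)}$. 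Collecting the three scalar factors (the prefactor $N$, the size-bias factor $\alpha_{j_0k}/\sum_i\alpha_{ik}$, and the mixture weight $\pi_k$) yields $\int g\,\mathrm{d}P_{j_0}=\sum_{k=1}^m\hat\pi_k\int g\,\mathrm{d}P_{j_0}^{(k)}$ with $\hat\pi_k = N\pi_k\alpha_{j_0k}/\sum_i\alpha_{ik}$; one checks $\sum_k\hat\pi_k=N\,\mathrm{E}[Y_{j_0}]=1$, consistent with $P_{j_0}$ being a probability measure.

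I do not expect a genuine obstacle here: the argument is essentially bookkeeping, and the only delicate points are tracking the constants $B(\cdot)$ and $\Gamma(\cdot)$ in the size-biasing step and applying the Gamma-to-Dirichlet identity with the $j_0$-th parameter already incremented by one. Alternatively one could perform the change of variables $u_j=y_j/y_{j_0}$ directly on the simplex and integrate out a radial coordinate; this reaches the same conclusion but with heavier algebra and is the route I would avoid.
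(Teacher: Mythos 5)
Your proof is correct and follows essentially the same route as the paper's: substitute the law of $NY$ into \eqref{eq:prop2}, recognize the $y_{j_0}$-size-biased Dirichlet component as the Dirichlet with $\alpha_{j_0k}$ incremented by one, and pass to the Gamma representation, where the normalizing sum cancels in the ratio to the $j_0$-th coordinate. One point worth flagging: your mixture weight $N\pi_k\alpha_{j_0k}/\sum_{i}\alpha_{ik}$, which sums to one by the constraint $\mathrm{E}[Y_{j_0}]=1/N$, is the consistent normalization, whereas the proposition as printed omits the factor $N$ in $\hat\pi_k$; the paper's own displayed computation carries that $N$ outside the sum and then silently drops it.
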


\section{Simulation on dense grids}\label{sec_simu}

In many applications, one is interested in simulating a max-stable process
$Z$ on a dense grid, e.g.\ $x = \mathcal{X}\cap (\varepsilon \mathbb{Z})^d$.
As discussed in Section \ref{sec:complexity}, on average, this requires the 
simulation of $\mathrm{E} C_2(N)=N$ random functions in Algorithm 
\ref{algo-ppp}, that is, the simulation of $N$ random vectors of size $N$.
For small $\varepsilon$, $N$ will be large and the procedure can become
very time consuming. 
Thus, one might be interested in aborting Algorithm \ref{algo-ppp} after 
$m<N$ steps, ensuring exactness of the simulation 
only at locations $x_1,\ldots,x_m$. In this case, an alternative
design of the algorithm which efficiently chooses the subset of $m$
locations might improve the probability of an exact sample
at all $N$ locations.

For comparison of two designs, we introduce the random number
\begin{equation*}
  N_0 = \min \{ m \in \{1,\ldots,N\}: Z_{m}({x}) = Z_{N}({x})\}.
\end{equation*}	
For $n\geq N_0$, the algorithm does not provide any new extremal functions,
but all the simulated functions are rejected. Hence, $N_0$ is the optimal
number of iterations before aborting the algorithm. One design
is preferable to another if its corresponding random number $N_0$ tends
to be smaller. An efficient design should thus simulate the
extremal functions at an early stage of the algorithm. Based on the intuition 
that $\phi^+_{x_{n+1}}$ is likely not to be contained in $\Phi^+_n$ if 
$Z_n(x_{n+1})$ is small, we propose the following adaptive numbering 
$x^{(1)},\dots, x^{(N)}$ of points
in Algorithm \ref{algo-ppp}:
\begin{equation}\label{eq:adapt}
 \mathrm{set}\ x^{(1)} = x_1 \ \mathrm{and}\  x^{(n+1)} = \mathrm{argmin}\left\{ Z_n(x): x \in \{x_1,\ldots,x_N\}\setminus \{x^{(1)}, \ldots, x^{(n)}\}\right\},
\end{equation}
for $n = 1,\dots, N-1$.
A simulation study indicates that this adaptive version is clearly
preferable to Algorithm \ref{algo-ppp} with a deterministic numbering
of locations. The advantage is particularly big in the case of strong
dependence which corresponds to simulation on dense grids. More details
on the simulation study and its results are provided in the supplementary
material to this paper.

\section*{Acknowledgement}

Financial support from the Swiss National Science Foundation (second author)
and the ANR project McSim (third author) is gratefully acknowledged.

\bibliography{ref}

\pagebreak

\appendix
\section{Supplementary material}

\subsection{Proof of Proposition \ref{prop:complex}}

\begin{proof}[Proof of Proposition \ref{prop:complex}]
In order to analyze the complexity of Algorithm \ref{algo-ppp}, we consider
each step of the algorithm separately. In the $n$th step, i.e.\ for sampling
the process perfectly at site $x_n$, we simulate Poisson points $\zeta$ and 
stochastic processes $Y$, until
one of the following two conditions is satisfied:
\begin{itemize}
 \item[(a)] $\zeta < Z_{n-1}(x_n)$. This condition is checked directly after the 
       simulation of $\zeta$ and, in this case, no stochastic process
       $Y$ needs to be simulated.
 \item[(b)] $\zeta > Z_{n-1}(x_n)$ and $\zeta Y(x_i) \leq Z(x_i)$ for all 
       $1 \leq i <n-1$. In this case, $Z$ is updated and $\zeta Y$ is
       an extremal function as it contributes to $Z$ at site $x_n$ (and possibly
       also at some of the sites $x_{n+1},\ldots,x_N$).       
\end{itemize}
Thus, any stochastic process that is simulated is either rejected, i.e.\ it is
not considered as contribution to $Z$ as it does not respect all the values 
$Z(x_1), \ldots, Z(x_{n-1})$, or it leads to an extremal function. 
Denoting by $\{(\xi_i^{(n)},\psi_i^{(n)}), i \geq 1\}$ a Poisson point process
on $(0,\infty) \times \mathcal{C}$ with intensity measure $\xi^{-2} {\rm d}\xi 
\, P_{x_n}({\rm d} \psi)$, the random number $C_2(N)$ of processes simulated in
Algorithm \ref{algo-ppp} satisfies
\begin{align} \label{eq:number-decomp-2}
 C_2(N) ={}& |\Phi^{+}_{\{x_1,\ldots,x_N\}}| 
        + \sum_{n=2}^N \left|\left\{i \geq 1: \ \xi_i^{(n)} > Z(x_n), \
				  \xi_i^{(n)} > \min_{j=1}^{n-1} \frac{Z(x_j)}{\psi_i^{(n)}(x_j)}\right\}\right|.
\end{align}
In this formula, the term $|\Phi^{+}_{\{x_1,\ldots,x_N\}}|$ is the number of 
extremal functions that need to be simulated, and the term with index $n$ in 
the sum is the number of functions that are simulated but rejected since 
$\xi_i^{(n)}\psi_i^{(n)}(x_j)>Z(x_j)$ for some $j\leq n-1$. For the computation 
of the expectation of the second term, conditionally on 
$\Phi^{+}_{\{x_1,\ldots,x_{n-1}\}}$, i.e.\ for fixed $Z(x_j)$, 
$1 \leq j \leq n-1$, the two sets 
\begin{align*}
             \Phi^{(n)}_1 ={}& \{ (\xi_i^{(n)}, \psi_i^{(n)}): \, \xi_i^{(n)} \psi_i^{(n)}(x_j) > Z(x_j) 
						 \text{ for some } j=1,\ldots,n-1\}\\
 \text{and } \Phi^{(n)}_2 ={}& \{ (\xi_i^{(n)}, \psi_i^{(n)}): \, \xi_i^{(n)} \psi_i^{(n)}(x_j) \leq Z(x_j) 
             \text{ for all } j=1,\ldots,n-1\}
\end{align*}
are independent Poisson point processes with intensities
$\xi^{-2} \mathbf{1}_{\left\{\xi > \min_{j=1}^{n-1} (Z(x_j) / \psi(x_j))\right\}} {\rm d}\xi \, P_{x_n}({\rm d}\psi)$
and 
$\xi^{-2} \mathbf{1}_{\left\{\xi < \min_{j=1}^{n-1} (Z(x_j) / \psi(x_j))\right\}} {\rm d}\xi \, P_{x_n}({\rm d}\psi)$,
respectively. Conditioning further on $\Phi^{(n)}_2$, $Z(x_n)$ is also fixed
and we obtain
\begin{align*}
    & \mathrm{E} \left(\left|\left\{(\xi_i^{(n)},\psi_i^{(n)}): \, \xi_i^{(n)} > Z(x_n), \, 
                                            \xi_i^{(n)} > \min_{j=1}^{n-1} \frac{Z(x_j)}{\psi_i^{(n)}(x_j)}\right\}\right|\right)
			\displaybreak[0]\\
 ={}& \mathrm{E}\left(\mathrm{E}\left(\left|\left\{ (\xi,\psi) \in \Phi^{(n)}_1: \, \xi > Z(x_n) \right\}\right| \, \Big| \ 
                                      \Phi^{+}_{\{x_1,\ldots,x_{n-1}\}}, \, \Phi^{(n)}_2\right)\right) \displaybreak[0]\\
 ={}& \mathrm{E}\left(\int \int \xi^{-2} \mathbf{1}_{\{\xi > Z(x_n)\}} 
                                         \mathbf{1}_{\left\{\xi > \min_{j=1}^{n-1} \frac{Z(x_j)}{\psi(x_j)}\right\}} 
																				{\rm d}\xi \,  P_{x_n}({\rm d}\psi)\right) \displaybreak[0]\\
 ={}& \mathrm{E}\left(\min\left\{\frac 1 {Z(x_n)}, \max_{j=1}^{n-1} \frac {Y_n(x_j)}{Z(x_j)}\right\}\right)
\end{align*}
where $Y_n \sim P_{x_n}$ and $Z$ are independent. The relation $\min\{a,b\} = a + b - \max\{a,b\}$,
$a, b \in \mathbb{R}$, and the fact that $Y_n(x_n)=1$ almost surely yield
\begin{align*}
    & \mathrm{E} \left(\left|\left\{(\xi_i^{(n)},\psi_i^{(n)}): \, \xi_i^{(n)} > Z(x_n), \, 
                                            \xi_i^{(n)} > \min_{j=1}^{n-1} \frac{Z(x_j)}{\psi_i^{(n)}(x_j)}\right\}\right|\right)
			\displaybreak[0]\\
 ={}& \mathrm{E} \left(\frac 1 {Z(x_n)}\right) + \mathrm{E}\left(\max_{j=1}^{n-1} \frac {Y_n(x_j)}{Z(x_j)}\right)
      -  \mathrm{E}\left(\max_{j=1}^{n} \frac {Y_n(x_j)}{Z(x_j)}\right)			\displaybreak[0]\\
 ={}& 1 + \mathrm{E} | \Phi^{+}_{\{x_1,\ldots,x_{n-1}\}} | - \mathrm{E} |\Phi^{+}_{\{x_1,\ldots,x_{n}\}}|,
\end{align*}
as $\mathrm{E}|\Phi^{+}_{\{x_1,\ldots,x_n\}}| = \mathrm{E}\left(\max_{j=1}^n Y_n(x_j)/Z(x_j)\right)$
by Lemma 4.7 in \citet{OSZ-2013}. Thus, by \eqref{eq:number-decomp-2}, we obtain
\begin{align*}
 \mathrm{E} C_2(N) ={}& \mathrm{E} | \Phi^{+}_{\{x_1,\ldots,x_{N}\}} | 
                 + \sum\nolimits_{n=2}^N \left(1 + \mathrm{E} | \Phi^{+}_{\{x_1,\ldots,x_{n-1}\}} | 
								- \mathrm{E} |\Phi^{+}_{\{x_1,\ldots,x_{n}\}}|\right) \displaybreak[0]\\
                 ={}& N-1 + \mathrm{E} | \Phi^{+}_{\{x_1\}} | = N.
\end{align*}
Moreover, by \eqref{eq:normalisation}, we have that $\mathrm{E} Z(x_i)^{-1}=1$ for
 $i=1,\ldots,N$, and, thus, 
$$\mathrm{E}\left(\max_{i=1}^N Z(x_i)^{-1}\right) \geq 1,$$
with equality if only if $Z(x_1) = \ldots = Z(x_N)$ holds almost surely.
\end{proof}

\subsection{Proofs for Section \ref{sec:examples}}
\subsubsection{Moving maximum process}
\begin{proof}[Proof of Proposition \ref{prop:mmp}]
In the case of the moving maximum process $\eqref{eq:MMP}$, the measure $\nu$ associated with the representation \eqref{eq:spectralrep} is
\[
\nu(A)=\int_{\mathcal{X}} 1_{\{h(\cdot-\chi)\in A\}}\lambda(\mathrm{d}\chi),\quad A\subset \mathcal{C}\ \mbox{Borel}.
\]
We deduce from Proposition \ref{prop2}, 
\begin{eqnarray*}
P_{x_0}(A)&=&\int_{\mathcal{C}}1_{\{f/f({x_0}) \in A\}}f({x_0})\nu(\mathrm{d}f)
{}={} \int_{\mathcal{X}}1_{\{h(\cdot-\chi)/h({x_0}-\chi) \in A\}}h({x_0}-\chi)\lambda(\mathrm{d}\chi)\\
&=&\int_{\mathcal{X}}1_{\{h(\cdot+u-{x_0})/h(u) \in A\}}h(u)\lambda(\mathrm{d}u)
\end{eqnarray*}
where the last line follows from the simple change of variable ${x_0}-\chi=u$. This proves the result  since $h(u)\lambda(\mathrm{d}u)$ is a density function on $\mathcal{X}$.
\end{proof}

\subsubsection{Brown--Resnick process}

Our proof of Proposition \ref{prop:BR} relies on the following lemma on exponential changes of measures for Gaussian
processes.

\begin{lemma}\label{lem:BR}
The distribution of the random process $(W(x))_{x\in\mathcal{X}}$ under the transformed probability measure  $\widehat{\mathrm{pr}}=e^{W(x_0)-\sigma^2(x_0)/2}\mathrm{d}\mathrm{pr}$ is equal 
to the distribution of the Gaussian random process 
\[
W(x)+c(x_0,x),\quad x\in \mathcal{X},
\]
where $c(x,y)$ denotes the covariance between $W(x)$ and $W(y)$.
\end{lemma}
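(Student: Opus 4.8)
The plan is to identify the two process laws through their finite-dimensional distributions, using the explicit form of the (complex) Laplace transform of a Gaussian vector. First I would check that $\widehat{\mathrm{pr}}$ is a genuine probability measure: since $W(x_0)$ is centered Gaussian with variance $\sigma^2(x_0)$, one has $\mathrm{E}[e^{W(x_0)}]=e^{\sigma^2(x_0)/2}$, hence $\mathrm{E}[e^{W(x_0)-\sigma^2(x_0)/2}]=1$ and $\widehat{\mathrm{pr}}$ has total mass one.

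Next, since both $(W(x))_{x\in\mathcal X}$ and $(W(x)+c(x_0,x))_{x\in\mathcal X}$ are sample-continuous, their laws on $\mathcal C(\mathcal X,\mathbb R)$ are determined by their finite-dimensional distributions, so it suffices to fix $x_1,\dots,x_n\in\mathcal X$ and $t_1,\dots,t_n\in\mathbb R$ and compute the characteristic function of $(W(x_1),\dots,W(x_n))$ under $\widehat{\mathrm{pr}}$. Setting $Y=W(x_0)+i\sum_{j=1}^n t_j W(x_j)$, a complex linear combination of jointly Gaussian centered variables, the identity $\mathrm{E}[e^{Y}]=\exp(\tfrac12\mathrm{E}[Y^2])$ together with the expansion $\mathrm{E}[Y^2]=\sigma^2(x_0)+2i\sum_j t_j c(x_0,x_j)-\sum_{j,k}t_j t_k c(x_j,x_k)$ gives
\begin{align*}
 \widehat{\mathrm{E}}\Bigl[e^{i\sum_{j} t_j W(x_j)}\Bigr]
 &= e^{-\sigma^2(x_0)/2}\,\mathrm{E}[e^{Y}]
  = \exp\Bigl(i\sum_{j=1}^n t_j c(x_0,x_j)-\tfrac12\sum_{j,k=1}^n t_j t_k c(x_j,x_k)\Bigr).
\end{align*}
The right-hand side is precisely the characteristic function of a Gaussian vector with mean $(c(x_0,x_j))_{1\le j\le n}$ and covariance matrix $(c(x_j,x_k))_{1\le j,k\le n}$, that is, of $(W(x_1)+c(x_0,x_1),\dots,W(x_n)+c(x_0,x_n))$. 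Since the locations were arbitrary, the two process laws coincide, which is the assertion.

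The only points needing a little care are formal ones: justifying the Gaussian Laplace-transform identity for a complex exponent (standard, either by analytic continuation or by splitting real and imaginary parts), recalling that sample continuity reduces the statement to finite-dimensional distributions, and carrying out the bilinear computation of $\mathrm{E}[Y^2]$ correctly. I do not expect any real obstacle here; alternatively, the lemma can be read off from the Cameron--Martin theorem, but the direct finite-dimensional computation above is self-contained and shorter.
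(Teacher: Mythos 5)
Your proof is correct and takes essentially the same approach as the paper: reduce to finite-dimensional distributions and compute the transform of $(W(x_1),\dots,W(x_n))$ under the tilted measure, recognizing the transform of a Gaussian vector with mean $(c(x_0,x_j))_j$ and unchanged covariance. The only cosmetic difference is that you use the characteristic function (complex exponent) where the paper evaluates the real Laplace transform at $\tilde\theta=(1,\theta)$ and expands the quadratic form via a block decomposition; the underlying bilinear computation is identical.
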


\begin{proof}[Proof of Lemma \ref{lem:BR}]
We need  to consider  finite dimensional distributions only and we compute for some $x_1,\ldots,x_k\in\mathcal{X}$
the Laplace transform of $(W(x_i))_{1\leq i\leq k}$ under the transformed probability measure
$\widehat{\mathrm{pr}}$. For all $\theta=(\theta_1,\ldots,\theta_k)\in\mathbb{R}^k$, we have
\begin{align}
\mathcal{L}(\theta_1,\ldots,\theta_k)={}& \widehat{\mathrm{E}}\left[e^{\sum_{i=1}^k \theta_i W(x_i)} \right]
{}={}\mathrm{E}\left[e^{W(x_0)-\sigma^2(x_0)/2}e^{\sum_{i=1}^k \theta_i W(x_i)} \right]\nonumber\\
={}& \exp\left(\frac{1}{2}\tilde\theta^\T \Sigma \tilde\theta-\frac{1}{2}\sigma^2(x_0)\right),\label{eq:Laplace}
\end{align}
with $\tilde\theta=(1,\theta)\in\mathbb{R}^{k+1}$ and $\tilde\Sigma=\left(c(x_i,x_j)\right)_{0\leq i,j\leq k}$ the covariance matrix. 
We introduce the block decomposition 
\[
\tilde\Sigma= \left(\begin{array}{cc} \sigma^2(x_0) & \Sigma_{0,k} \\ \Sigma_{k,0} & \Sigma \end{array} \right)
\]
with $\Sigma=\left(c(x_i,x_j)\right)_{1\leq i,j\leq k}$ and $\Sigma_{k,0}=\Sigma_{0,k}^\T=(c(x_0,x_i))_{1\leq i\leq k}$.
The quadratic form in Equation \eqref{eq:Laplace} can be rewritten as
\begin{align*}
\frac{1}{2}\tilde\theta^\T \tilde\Sigma \tilde\theta-\frac{1}{2}\sigma^2(x_0)
=\frac{1}{2}\left( \sigma^2(x_0) +\theta^\T\Sigma\theta+ 2\theta^\T\Sigma_{k,0} \right)-\frac{1}{2}\sigma^2(x_0)
= \theta^\T\Sigma_{k,0}+\frac{1}{2}\theta^\T\Sigma\theta.
\end{align*}
We recognize the Laplace transform of a Gaussian random vector with mean $\Sigma_{k,0}$ and covariance matrix $\Sigma$ whence the Lemma follows.
\end{proof}

\begin{proof}[Proof of Proposition \ref{prop:BR}]
Equations \eqref{eq:prop2} and \eqref{eq:BR} together with Lemma \ref{lem:BR} yield, for all Borel set $A\subset\mathcal{C}$,
\begin{eqnarray*}
P_{x_0}(A)&=& \int_{\mathcal{C}}1_{\{f/f(x) \in A\}}f(x)\nu(\mathrm{d}f)
{}={} \mathrm{E}\left[e^{W(x_0)-\frac{1}{2}\sigma^2(x_0)}1_{\{e^{W(\cdot)-\frac{1}{2}\sigma^2(\cdot)}/
               e^{W(x_0)-\frac{1}{2}\sigma^2(x_0)}\in A\}} \right]\\
&=& \widehat{\mathrm{pr}}\left[ \exp\left(W(\cdot)-W(x_0)-\frac{1}{2}(\sigma^2(\cdot)-\sigma^2(x_0))\right)\in A\right]\\
&=& \mathrm{pr}\left[ \exp\left(W(\cdot)+c(x_0,\cdot)-W(x_0)-c(x_0,x_0)-\frac{1}{2}(\sigma^2(\cdot)-\sigma^2(x_0))\right)\in A\right]\\
&=& \mathrm{pr}\left[ \exp\left(W(\cdot)-W(x_0)-\frac{1}{2}(\sigma^2(\cdot)+\sigma^2(x_0)-2c(x_0,\cdot))\right)\in A\right].
\end{eqnarray*}
Using the fact that for all $x\in\mathcal{X}$
\[
\sigma^2(x)+\sigma^2(x_0)-2c(x_0,x)=\mathrm{Var}[W(x)-W(x_0)]
\]
we deduce that $P_{x_0}$ is equal to the distribution of the log-normal process
\[
\widetilde Y(x)=\exp\left(W(x)-W(x_0)-\frac{1}{2}\mathrm{Var}[W(x)-W(x_0)] \right),\quad x\in\mathcal{X}.
\]
This proves Proposition \ref{prop:BR}.
\end{proof}

\subsubsection{Extremal-$t$ process}

In the sequel, we write shortly $z^\alpha=\max(0,z)^\alpha$ for all real numbers $z$.

\begin{lemma}\label{lem:extremal-t}
The distribution of the random process $\left(W(x)/W(x_0)\right)_{x\in\mathcal{X}}$ 
under the transformed probability measure  $\widehat{\mathrm{pr}}=c_\alpha W(x_0)^\alpha\mathrm{d}\mathrm{pr}$ is equal to the distribution of a Student 
process with $\alpha+1$ degrees of freedom, location parameter $\mu_k$ and scale matrix $\widehat\Sigma_k$ given by
\[
\mu_k=\Sigma_{k,0}\quad\mbox{and}\quad\widehat\Sigma_k=\frac{\Sigma_k-\Sigma_{k,0}\Sigma_{0,k}}{\alpha+1}, 
\]
where $\Sigma_k=\left(c(x_i,x_j)\right)_{1\leq i,j\leq k}$ and $\Sigma_{k,0}=\Sigma_{0,k}^\T=(c(x_0,x_i))_{1\leq i\leq k}$.
\end{lemma}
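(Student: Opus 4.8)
The plan is to work entirely at the level of finite-dimensional distributions: fix $x_1,\dots,x_k\in\mathcal{X}$ and identify the joint law of $(W(x_1)/W(x_0),\dots,W(x_k)/W(x_0))$ under $\widehat{\mathrm{pr}}$. I would first note that by the choice of $c_\alpha$ one has $\mathrm{E}[c_\alpha\max(0,W(x_0))^\alpha]=\mathrm{E}[Y(x_0)]=1$, so $\widehat{\mathrm{pr}}$ is genuinely a probability measure, and it is carried by $\{W(x_0)>0\}$ since the Radon--Nikodym density vanishes on $\{W(x_0)\le 0\}$; in particular the ratios $W(x_i)/W(x_0)$ are $\widehat{\mathrm{pr}}$-a.s.\ well defined.

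The key idea is the classical Gaussian regression decomposition combined with the fact that $\mathrm{d}\widehat{\mathrm{pr}}/\mathrm{d}\mathrm{pr}$ depends on $W$ only through $W(x_0)$. Set $\epsilon_i=W(x_i)-c(x_0,x_i)W(x_0)$ for $i=1,\dots,k$. Since $W$ has unit variance, $\mathrm{Cov}(\epsilon_i,W(x_0))=c(x_0,x_i)-c(x_0,x_i)=0$, so under $\mathrm{pr}$ the Gaussian vector $(\epsilon_1,\dots,\epsilon_k)$ is independent of $W(x_0)$ and centered with covariance matrix $\Sigma_k-\Sigma_{k,0}\Sigma_{0,k}$ (the Schur complement). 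Because the density $c_\alpha\max(0,W(x_0))^\alpha$ is a function of $W(x_0)$ alone, under $\widehat{\mathrm{pr}}$ the vector $(\epsilon_i)_i$ keeps this same centered Gaussian law and stays independent of $W(x_0)$, while $W(x_0)$ acquires the law with Lebesgue density $c_\alpha w^\alpha\varphi(w)\mathbf{1}_{\{w>0\}}$, where $\varphi$ is the standard normal density.

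Next I would make the change of variables $G=W(x_0)^2$: a one-line computation shows that under $\widehat{\mathrm{pr}}$ the variable $G$ has density proportional to $g^{(\alpha+1)/2-1}e^{-g/2}\mathbf{1}_{\{g>0\}}$, i.e.\ $G\sim\chi^2_{\alpha+1}$ — and the constant $c_\alpha$ is precisely what turns this into a probability density, which is a reassuring consistency check. Writing $W(x_i)/W(x_0)=c(x_0,x_i)+\epsilon_i/W(x_0)$, we obtain, under $\widehat{\mathrm{pr}}$,
\[
\left(\frac{W(x_1)}{W(x_0)},\dots,\frac{W(x_k)}{W(x_0)}\right)\;\overset{d}{=}\;\Sigma_{k,0}+\frac{1}{\sqrt{G}}\,(\epsilon_1,\dots,\epsilon_k),
\]
with $(\epsilon_i)_i\sim N(0,\Sigma_k-\Sigma_{k,0}\Sigma_{0,k})$ independent of $G\sim\chi^2_{\alpha+1}$. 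This is exactly the scale-mixture representation $\mu+(G/\nu)^{-1/2}Y$ of a multivariate Student law with $\nu=\alpha+1$ degrees of freedom, location $\mu=\Sigma_{k,0}=\mu_k$, and scale matrix $(\Sigma_k-\Sigma_{k,0}\Sigma_{0,k})/(\alpha+1)=\widehat\Sigma_k$, after rewriting $(\epsilon_i)/\sqrt{G}$ as $\big((\epsilon_i)/\sqrt{\alpha+1}\big)\big/\sqrt{G/(\alpha+1)}$. Since $x_1,\dots,x_k$ were arbitrary, this pins down all finite-dimensional distributions, hence the law of the process, and the lemma follows.

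I expect the only genuinely delicate point to be bookkeeping rather than conceptual: matching the Student scale-matrix normalization (the division by $\alpha+1$ arises solely from the rescaling above) and verifying that $c_\alpha$ is consistent with $W(x_0)^2\sim\chi^2_{\alpha+1}$. Possible degeneracy of $\Sigma_k-\Sigma_{k,0}\Sigma_{0,k}$, when some $x_i$ equals $x_0$ or is perfectly correlated with it, causes no difficulty, since the scale-mixture representation remains valid with a singular scale matrix.
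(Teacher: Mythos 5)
Your proof is correct, but it follows a genuinely different route from the paper's. The paper computes the $k$-dimensional density of the ratio vector directly: it writes the joint Gaussian density of $(W(x_i))_{0\le i\le k}$, changes variables to $(y_0, y_0 z)$, integrates out the radial variable $y_0$ via a Gamma integral, and then identifies the resulting expression as the multivariate Student density by means of the block inverse of $\widetilde\Sigma$. That computation requires assuming $\widetilde\Sigma$ nonsingular. Your argument instead combines the Gaussian regression decomposition $W(x_i)=c(x_0,x_i)W(x_0)+\epsilon_i$ with the observation that the Radon--Nikodym density depends on $W$ only through $W(x_0)$, so that the residuals retain their law and independence under $\widehat{\mathrm{pr}}$ while $W(x_0)^2$ is tilted exactly into a $\chi^2_{\alpha+1}$ (your normalization checks are right: the stated $c_\alpha$ is precisely $2^{(2-\alpha)/2}\sqrt{\pi}/\Gamma(\tfrac{\alpha+1}{2})$, which both makes $\widehat{\mathrm{pr}}$ a probability measure and turns the tilted density of $W(x_0)^2$ into the $\chi^2_{\alpha+1}$ density). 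Recognizing the resulting $\mu_k+G^{-1/2}\epsilon$ as the normal variance-mixture representation of the Student law with $\nu=\alpha+1$, location $\mu_k$ and scale $\widehat\Sigma_k$ then finishes the proof. What your approach buys is twofold: it is free of density manipulations and block-matrix inversions, so it covers degenerate $\Sigma_k-\Sigma_{k,0}\Sigma_{0,k}$ without a separate limiting argument (a case the paper's proof sets aside and does not return to), and it makes transparent where the $\alpha+1$ degrees of freedom come from. What the paper's computation buys is an explicit closed-form density, which is occasionally useful downstream. Both are complete proofs of the lemma.
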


\begin{proof}[Proof of Lemma \ref{lem:extremal-t}]
We consider finite dimensional distributions only. Let $k\geq 1$ and $x_1,\ldots,x_k\in\mathcal{X}$. 
We first assume that the covariance matrix $\widetilde \Sigma=\left(c(x_i,x_j)\right)_{0\leq i,j\leq k}$ is non singular so that $(W(x_i))_{0\leq i\leq k}$
has density
\[
\tilde g({y})=(2\pi)^{-(k+1)/2} \mathrm{det}(\widetilde\Sigma)^{-1/2}\exp\left(-\frac{1}{2}{y}^\T\widetilde\Sigma^{-1}{y}\right)\quad \mbox{with }{y}=(y_i)_{0\leq i\leq k}.
\]
Setting ${z}=(y_i/y_0)_{1\leq i\leq k}$, we have for all Borel sets $A_1,\ldots, A_k\subset\mathbb{R}$
\begin{eqnarray*}
\widehat{\mathrm{pr}}\left[\frac{W(x_i)}{W(x_0)}\in A_i,\ i=1,\ldots,k\right]
&=&\int_{\mathbb{R}^{k+1}} 1_{\{y_i/y_0\in A_i,\ i=1,\ldots,k\}}c_\alpha y_0^\alpha \tilde g({\bf y})\,\mathrm{d}{y}\\
&=&\int_{\mathbb{R}^{k}}1_{\{z_i\in A_i,\ i=1,\ldots,k\}}\left( \int_0^\infty c_\alpha y_0^\alpha  \tilde g(y_0,y_0{z})\,y_0^k \mathrm{d}y_0\right)\mathrm{d}{z}
\end{eqnarray*}
We deduce that under $\widehat{\mathrm{pr}}$, the random vector $(W(x_i)/W(x_0))_{1\leq i\leq k}$ has density
\begin{eqnarray*}
g({z})&=&\int_0^\infty c_\alpha y_0^{k+\alpha}  \tilde{g}(y_0,y_0{z})\,\mathrm{d}y_0\\
&=& c_\alpha(2\pi)^{-(k+1)/2} \mathrm{det}(\widetilde\Sigma)^{-1/2}\int_0^\infty  y_0^{k+\alpha} \exp\left(-\frac{\tilde{{z}}^\T\widetilde\Sigma^{-1}\tilde{{z}}}{2}y_0^2\right) \mathrm{d}y_0
\end{eqnarray*}
with $\tilde{{z}}=(1,{z})$. Using the change of variable $u=\frac{\tilde{{z}}^\T\widetilde\Sigma^{-1}\tilde{{z}}}{2}y_0^2$, we get
\begin{eqnarray*}
\int_0^\infty  y_0^{k+\alpha} \exp\left(-\frac{\tilde{{z}}^\T\widetilde\Sigma^{-1}\tilde{{z}}}{2}y_0^2\right) \mathrm{d}y_0
&=& \frac{1}{2} \left(\frac{\tilde{{z}}^\T\widetilde\Sigma^{-1}\tilde{{z}}}{2}\right)^{-\frac{\alpha+k+1}{2}}\int_0^\infty  u^{(k+\alpha-1)/2} \exp\left(-u\right) \mathrm{d}u\\
&=& \frac{1}{2} \left(\frac{\tilde{{z}}^\T\widetilde\Sigma^{-1}\tilde{{z}}}{2}\right)^{-\frac{\alpha+k+1}{2}}\Gamma\left(\frac{k+\alpha+1}{2}\right)
\end{eqnarray*}
and we obtain after simplification
\[
g({z})=\pi^{-k/2} \frac{\Gamma\left(\frac{k+\alpha+1}{2}\right)}{\Gamma\left(\frac{\alpha+1}{2}\right)}\mathrm{det}(\widetilde\Sigma)^{-1/2}\left(\tilde{{z}}^\T\widetilde\Sigma^{-1}\tilde{{z}}\right)^{-\frac{\alpha+k+1}{2}}.
\]
Introducing the block decomposition 
$\widetilde\Sigma= \left(\begin{array}{cc} 1 & \Sigma_{0,k} \\ \Sigma_{k,0} & \Sigma_k \end{array} \right)$,
the inverse matrix is
\[
\widetilde\Sigma^{-1}= \left(\begin{array}{cc} 
1+\Sigma_{0,k}(\Sigma_k-\Sigma_{k,0}\Sigma_{0,k})^{-1}\Sigma_{k,0} & -\Sigma_{0,k}(\Sigma_k-\Sigma_{k,0}\Sigma_{0,k})^{-1} \\ 
-(\Sigma_k-\Sigma_{k,0}\Sigma_{0,k})^{-1}\Sigma_{k,0} & (\Sigma_k-\Sigma_{k,0}\Sigma_{0,k})^{-1} 
\end{array} \right).
\]
By the definition of $\mu_k$ and $\widehat{\Sigma}_k$, we have
\[
\widetilde\Sigma^{-1}= \frac{1}{1+\alpha}\left(\begin{array}{cc} 
1+\alpha+\mu_k^\T\widehat\Sigma_k^{-1}\mu_k & -\mu_k^\T\widehat\Sigma_k^{-1} \\ 
-\widehat\Sigma_k^{-1}\mu_k & \widehat\Sigma_k^{-1} 
\end{array} \right)
\]
and
\[
\tilde{{z}}^\T\widetilde\Sigma^{-1}\tilde{{z}}= (1,{z})^\T\widetilde\Sigma^{-1}(1,{z})
=\left(1+\frac{({z}-\mu_k)^\T\widehat\Sigma^{-1}_k({z}-\mu_k)}{\alpha+1}\right)
\]
Finally, we obtain after simplification
\[
g({z})=\pi^{-k/2} (\alpha+1)^{-k/2} \frac{\Gamma\left(\frac{k+\alpha+1}{2}\right)}{\Gamma\left(\frac{\alpha+1}{2}\right)}
\mathrm{det}(\widehat\Sigma_k)^{-1/2}\left(1+\frac{({z}-\mu_k)^\T\widehat\Sigma^{-1}_k({z}-\mu_k)}{\alpha+1}\right)^{-\frac{\alpha+k+1}{2}}.
\]
We recognize the $k$-variate Student density with $\alpha+1$ degrees of freedom, location 
parameter $\mu$ and scale matrix $\hat\Sigma_k$. 
\end{proof}

\begin{proof}[Proof of Proposition \ref{prop:extremal-t}]
Consider the  set
\[
A=\{f\in\mathcal{C}_0;f(x_1)\in A_1,\cdots,f(x_k)\in A_k\}.
\]
Equations \eqref{eq:prop2} and \eqref{eq:extremal-t} together with Lemma \ref{lem:extremal-t} yield,
\begin{eqnarray*}
P_{x_0}(A)&=& \int_{\mathcal{C}}1_{\{f/f(x) \in A\}}f(x)\nu(\mathrm{d}f)
{}={} \mathrm{E}\left[c_\alpha W(x_0)^\alpha 1_{\{W(x_i)^\alpha/W(x_0)^\alpha \in A_i,\ i=1,\ldots,k \}} \right]\\
&=& \widehat{\mathrm{pr}}\left[ W(x_i)^\alpha/W(x_0)^\alpha \in A_i,\ i=1,\ldots,k \right]\\
&=& \mathrm{pr}\left[ T_i^\alpha\in A_i,\ i=1,\ldots,k \right]
\end{eqnarray*}
where $T=(T_1,\ldots,T_k)$ has a multivariate Student distribution with $\alpha+1$ degrees of freedom, location
parameter $\mu_k$ and dispersion matrix $\widehat\Sigma_k$. This proves the result.
\end{proof}

\subsubsection{Multivariate extreme value distributions}

{\bf Logistic model}

\begin{proof}[Proof of Proposition \ref{prop:logistic}]
It is easily shown that the logistic model admits the representation
\[
 Z=\max_{i\geq 1} \zeta_i F_i
\]
where the $F_i$'s are independent random vectors with independent $\mathrm{Frechet}(\beta,c_\beta)$-distributed components.
To check this, we compute
\begin{align*}
   & \mathrm{E}\bigg[\max_{j=1}^N \frac{F_j}{z_j} \bigg]
 {}={} \int_0^\infty \mathrm{pr}\left[\max_{j=1}^N \frac{F_j}{z_j}>u\right]\mathrm{d}u
 {}={} \int_0^\infty \left(1-\prod\nolimits_{i=1}^N \mathrm{pr}[F_j<z_ju]\right)\mathrm{d}u\\
 ={}& \int_0^\infty \left(1-\prod\nolimits_{i=1}^N  e^{-(z_j u/c_\beta)^{-\beta}}\right)\mathrm{d}u=\int_0^\infty 
   \left(1- e^{-u^{-\beta}\sum\nolimits_{j=1}^N (z_j/c_\beta)^{-\beta}}\right)\mathrm{d}u
 {}={} \left(\sum\nolimits_{j=1}^N z_j^{-\beta}\right)^{1/\beta}.
\end{align*}
For the computation of the last integral, we recognize the expectation of a Fr\'echet distribution.
Next we use the fact that $P_{j_0}$ is the distribution of $F/F_{j_0}$ under the transformed density
\[
 y_{j_0}\prod_{k=1}^N \frac{\beta}{c_\beta}\left(\frac{y_k}{c_\beta}\right)^{-1-\beta}e^{-(y_k/c_\beta)^{-\beta}}.
\]
We recognize a product measure where the $j$th margin, $j\neq j_0$, has a $\mathrm{Frechet}(\beta,c_\beta)$
distribution. The $j_0$th marginal has density
\[
 y_{j_0} \frac{\beta}{c_\beta}\left(\frac{y_{j_0}}{c_\beta}\right)^{-1-\beta}e^{-(y_{j_0}/c_\beta)^{-\beta}}
\]
and a simple change of variable reveals that this is the density of $c_\beta Z^{-1/\beta}$
with $Z\sim\mathrm{Gamma}(1-1/\beta,1)$.
\end{proof}

{\bf Negative logistic model}

\begin{proof}[Proof of Proposition \ref{prop:neglogistic}]
Similarly to the logistic model, we have the spectral representation 
\[
 Z=\max_{i\geq 1} \zeta_i W_i
\]
where the $W_i$'s are independent random vectors with independent $\mathrm{Weibull}(\theta,c_\theta)$-distributed components with scale parameter $c_\theta=\frac{1}{\Gamma(1+1/\theta)}$. To check this, we compute
\begin{align*}
    \mathrm{E}\bigg[\max_{j=1}^N & \frac{W_j}{z_j} \bigg]
 {}={} \int_0^\infty \mathrm{pr}\left[\max_{j=1}^N \frac{W_j}{z_j}>u\right]\mathrm{d}u
 {}={} \int_0^\infty \left(1-\prod_{j=1}^N \mathrm{pr}[W_j<z_ju]\right)\mathrm{d}u\\
 ={}& \int_0^\infty \left(1-\prod_{j=1}^N \left(1- e^{-(z_ju/c_\theta)^\theta}\right)\right)\mathrm{d}u
 {}={} -\sum_{J}(-1)^{|J|}\int_0^\infty  e^{-u^\theta\sum_{j\in J}(z_j/c_\theta)^\theta}\mathrm{d}u\\
 ={}& -\sum_{J}(-1)^{|J|} \left(\sum\nolimits_{j\in J}(z_j/c_\theta)^\theta\right)^{-1/\theta}\Gamma(1+1/\theta)
 {}={} -\sum_{J}(-1)^{|J|} \left(\sum\nolimits_{j\in J}z_j^\theta\right)^{-1/\theta}.
\end{align*}
For the computation of the last integral, we recognize the expectation of a Weibull distribution.
As for the logistic model, $P_{j_0}$ is the distribution of $W/W_{j_0}$ under the transformed density
\[
 y_{j_0}\prod_{k=1}^N \frac{\theta}{c_\theta}\left(\frac{y_k}{c_\theta}\right)^{\theta-1}e^{-(y_k/c_\theta)^{\theta}}.
\]
We recognize a product measure where the $j$th margin, $j\neq j_0$ has a $\mathrm{Weibull}(\theta,c_\theta)$
distribution. The $j_0$th marginal has density
\[
 y_{j_0} \frac{\theta}{c_\theta}\left(\frac{y_{j_0}}{c_\theta}\right)^{\theta-1}e^{-(y_{j_0}/c_\theta)^{\theta}}
\]
and a simple change of variable reveals that this is the density of $c_\theta Z^{1/\theta}$
with $Z\sim\mathrm{Gamma}(1+1/\theta,1)$.
\end{proof}

{\bf Dirichlet mixture model}

\begin{proof}[Proof of Proposition \ref{prop:dirichlet}]
By definition, $P_{j_0}$ has the form
\begin{align*}
 P_{j_0}(A) ={}& N \sum_{k=1}^m \pi_k \int_{S_{N-1}} y_{j_0} \mathbf{1}_{\{{y}/y_{j_0} \in A\}} 
                                              \text{diri}({y} \mid \alpha_{1k}, \ldots,\alpha_{Nk}) \, {\rm d} {y}\\
            ={}& N\sum_{k=1}^m \hat \pi_k \frac{\int_{S_{N-1}} y_{j_0} \mathbf{1}_{\{{y}/y_{j_0} \in A\}} 
                                              \text{diri}({y} \mid \alpha_{1k}, \ldots,\alpha_{Nk}) \, {\rm d} {y}}
                                             {\int_{S_{N-1}} y_{j_0} \text{diri}({y} \mid \alpha_{1k}, \ldots,\alpha_{Nk}) \, {\rm d} {y}},
          \quad A \subset (0,\infty)^N.
\end{align*}
Thus, $P_{j_0}$ is given as the mixture $P_{j_0} = \sum_{k=1}^m \hat \pi_k P_{j_0}^{(k)}$,
where for each $k = 1, \ldots, m$, the probability measure 
$P_{j_0}^{(k)}$ is equal to the distribution of the random vector 
$\tilde Y^{(k)} / \tilde Y_{j_0}^{(k)}$, and $\tilde Y^{(k)}$ has a 
transformed density proportional to $y_{j_0}\prod_{j=1}^N y_j^{\alpha_j-1}$.
We recognize the Dirichlet distribution with parameters $\tilde \alpha_{1k}, \ldots, \tilde \alpha_{Nk}$
given by
\[
 \tilde\alpha_{j_0k}=\alpha_{j_0k}+1\quad\mbox{and}\quad  \tilde\alpha_{jk}=\alpha_{jk}\quad j\neq j_0.
\]
It is well known that Dirichlet distributions can be expressed in terms of Gamma distributions. More
precisely, we have the stochastic representation 
$$\tilde Y^{(k)} = \left( G_1^{(k)} \Big/ \sum_{j=1}^N G_j^{(k)},\ldots, G_N^{(k)} \Big/ \sum_{j=1}^N G_j^{(k)}\right),$$
where $G_j^{(k)}$ are independent $\mathrm{Gamma}(\tilde\alpha_{jk},1)$ random variables.
The result follows since $P_{j_0}^{(k)}$ is the distribution of $\tilde Y^{(k)} / \tilde Y_{j_0}^{(k)}$.
\end{proof}

\subsection{Simulation Study}

We perform a simulation study to compare the adaptive version of
Algorithm \ref{algo-ppp} introduced in \eqref{eq:adapt} to a version, where
the numbering of locations is deterministic. The simulation study is based on
$5000$ simulations of a Brown--Resnick process associated to a semi-variogram 
of the type $\gamma(h) = c \|h\|^\alpha$ on the two-dimensional grid 
$\{0.05,0.15,\ldots,0.95\} \times \{0.05,0.15,\ldots,0.95\}$. We run Algorithm 
\ref{algo-ppp} with the deterministic design (the grid points are ordered by 
their coordinates in the lexicographical sense) and with the 
adaptive design \eqref{eq:adapt}. The simulation is repeated for
different values of the parameter vector $(c,\alpha)$ representing
strong dependence $((c,\alpha) = (1,1.5))$, moderate dependence 
$((c,\alpha)=(2.5,1))$ and weak dependence $((c,\alpha)=(5,0.5))$. The 
histograms of $N_0$ are shown in Figure \ref{fig:hist}. For each of the 
three parameter vectors, the number $N_0$ for the adaptive design is 
stochastically smaller than the corresponding number for the deterministic
design.

\begin{figure} \label{fig:hist}
 \centering \includegraphics[width=0.99\textwidth]{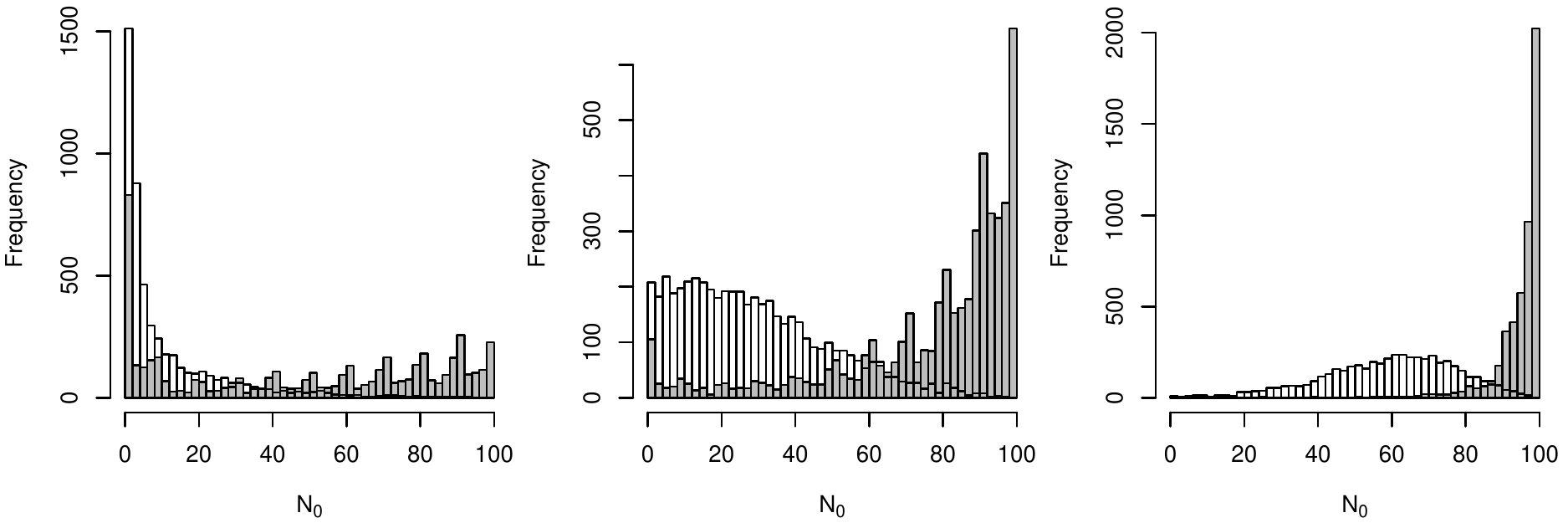}
 \caption{Histogram of $N_0$ based on 5000 realizations of a Brown--Resnick 
          process associated to the semi-variogram $\gamma(h) = c \|h\|^\alpha$
					with $c=1$ and $\alpha=1.5$ (left), $c=2.5$ and $\alpha=1$ (middle)
					and $c=5$ and $\alpha=0.5$ simulated via Algorithm \ref{algo-ppp} 
					with the deterministic design (grey) and the adaptive design 
					\eqref{eq:adapt} (white), respectively.}
\end{figure}

\end{document}